\newcommand{\blind}{0}
\newtheorem{theorem}{Theorem}
\newtheorem{definition}{Definition}
\newtheorem{proposition}{Proposition}
\newcommand{\indep}{\perp \!\!\! \perp}
\newcommand{\X}{{\bf X}}
\newcommand{\cP}{{\mathcal{P}}}
\newcommand{\Y}{{\bf Y}}
\newcommand{\spc}{{\mathcal S}_{Y|\X}}
\newcommand{\cms}{{\mathcal S}_{\E(Y|\X)}}
\newcommand{\R}{\mathbb{R}}
\newcommand{\A}{{\bf A}}
\newcommand{\bP}{{\bf P}}
\newcommand{\bEta}{\bm{\eta}}
\newcommand{\bLambda}{\bm{\Lambda}}
\newcommand{\E}{\mathrm{E}}
\newcommand{\bOmega}{\bm{\Omega}}
\newcommand{\bepsilon}{\bm{\epsilon}}
\newcommand{\I}{{\bf I}}
\DeclareMathOperator*{\argmin}{arg\,min}
\begin{document}

\def\spacingset#1{\renewcommand{\baselinestretch}%
{#1}\small\normalsize} \spacingset{1}


\if0\blind
{
  \title{\bf On metric choice in dimension reduction for Fr\'echet regression}
  \author{Abdul-Nasah Soale\thanks{Corresponding author: 
    abdul-nasah.soale@case.edu}, Congli Ma, Siyu Chen \hspace{.2cm}\\
    Department of Mathematics, Applied Mathematics, and Statistics,\\ Case Western Reserve University, Cleveland, OH, USA.\\
    and \\
   Obed Koomson\\
    Department of Population Health Sciences, \\
 Augusta University, Augusta, Georgia, USA}
  \maketitle
} \fi

\if1\blind
{
  \bigskip
  \bigskip
  \bigskip
  \begin{center}
    {\LARGE\bf Title}
\end{center}
  \medskip
} \fi

\bigskip
\begin{abstract}
Fr\'echet regression is becoming a mainstay in modern data analysis for analyzing non-traditional data types belonging to general metric spaces. This novel regression method is especially useful in the analysis of complex health data such as continuous monitoring and imaging data. Fr\'echet regression utilizes the pairwise distances between the random objects, which makes the choice of metric crucial in the estimation. In this paper, existing dimension reduction methods for Fr\'echet regression are reviewed, and the effect of metric choice on the estimation of the dimension reduction subspace is explored for the regression between random responses and Euclidean predictors. Extensive numerical studies illustrate how different metrics affect the central and central mean space estimators. Two real applications involving analysis of brain connectivity networks of subjects with and without Parkinson's disease and an analysis of the distributions of glycaemia based on continuous glucose monitoring data are provided, to demonstrate how metric choice can influence findings in real applications.
\end{abstract}

\noindent%
{\it Keywords:} $L_p$ spaces, brain imaging, Parkinson's disease, diabetes mellitus, continuous glucose monitoring
\vfill

\newpage
\spacingset{1.75} 

\section{Introduction\label{sec:1}}
Fr\'echet regression is becoming a mainstay in modern data analysis for analyzing such novel data types as probability distributions, positive definite matrices, networks, and other non-Euclidean data objects. Consider a response $Y$ in a general metric space $(\bOmega_Y,\rho)$ with a metric $\rho:\bOmega_Y\times\bOmega_Y \to [0, \infty)$, and a predictor $\X \in \R^k, k\geq 1$. Assume the respective marginal and conditional distributions $F_X, F_Y, F_{Y|X}$ and $F_{X|Y}$ exist and are well-defined. Then the conditional Fr\'echet mean of $Y|\X=\bm x$ is given by
\begin{align}
    m_{\oplus}(\bm x) = \argmin_{\omega\in\bOmega_Y} \E\big(\rho^2(Y,\omega)|\X=\bm x\big).
    \label{frechet_reg}
\end{align}

Because $Y$ is allowed to belong to a general metric space, including those that do not satisfy vector space properties, Fr\'echet regression is applicable in many fields. Some applications of this novel regression can be found in the study of mortality profiles in \cite{petersen2019frechet}, the influence of age on the development of brain myelination in \cite{petersen2019frechettime}, and the association between age, height, and weight on the structural connectivity of the human brain in \cite{soale2023data}, to name a few.

Following model (\ref{frechet_reg}), it is obvious that the dimensionality of $\X$ affects the estimation of the conditional Fr\'echet mean. This ``curse of dimensionality" problem can easily be tackled through sufficient dimension reduction as proposed by \citet{ying2022frechet, dong2022frechet, zhang2023dimension}, and \cite{soale2023data}. However, besides the dimensionality of $\X$, it is also obvious that $\rho$ plays a crucial role in Fr\'echet regression. And the existing literature has been clear that the choice of $\rho$ depends on the metric space under consideration. What remains unclear is how to choose $\rho$ when there are competing options. 

In most metric spaces, there are several options for $\rho$, which may vary drastically, especially in describing the geometry of the space. For instance, if say, $Y\in\R^r, r \geq 2$, the popular metric to use is the $\ell_2$-norm. However, if outliers are present in the data, a robust metric like the $\ell_1$-norm is probably better suited. Moreover, if the correlation between the responses is relevant, the Mahalanobis distance may be more appropriate. Similarly, if $Y$ belongs to the space of probability distributions, the typical choice of $\rho$ is the 2-Wasserstein distance. However, depending on the distributions under consideration, the 1-Wasserstein may be more robust. Also, other metrics such as the Hellinger distance or total variation distance will probably be more useful, especially when dealing with skewed distributions. Therefore, assuming a ``universal" metric for all scenarios for any given metric space, could lead to poor Fr\'echet regression estimates. In this paper, we will investigate the effect of metric choice on the estimation accuracy of the Fr\'echet dimension reduction in some popular metric spaces. We will present numerical comparisons of competing metrics for the same metric space under different scenarios and provide some theoretical justifications where possible. 

The rest of the paper is organized as follows. In Section 2, we provide a brief background on metric and pseudometric spaces. We also discuss some popular metric spaces and the different metric choices for each space. In Section 3, we give a general background to Fr\'echet sufficient dimension reduction, followed by an extensive simulation studies in Section 4. We present two real applications in Section 5 and conclude the paper in Section 6 with some discussion. All other technical details are relegated to the Appendix.

\section{A quick overview of metric spaces}
\subsection{General background}
\begin{definition}
Suppose $\rho:\bOmega_Y \times \bOmega_Y \to \R$ such that for all $y, y', y'' \in \bOmega_Y$,  (1) $\rho(y,y') \geq 0$;  (2) $\rho(y,y') = 0$ implies $y=y'$; (3) $\rho(y,y') = \rho(y',y)$; and (4) $\rho(y,y'') \leq \rho(y,y') + \rho(y',y'')$.

\noindent We call $\rho$ a {\bf metric} if it satisfies properties 1-4. If $\rho$ satisfies all the properties but 2, it is called a {\bf pseudometric}. 
\end{definition}

Therefore, the key difference between a metric and a pseudometric space is that unlike the metric space, there are at least two distinct elements that have a distance of zero between them in a pseudometric space. While this difference may matter under some mathematical considerations, typically, most properties of metric spaces also hold for pseudometrics. For example, regardless of whether $\rho$ is a metric or a pseudometric, we can find approximate Euclidean embeddings of $(\bOmega_Y, \rho)$ if the space is finite. 

Furthermore, it is important to note that regardless of whether we are dealing with a metric or a pseudometric space, different metric functions in a given space can have dramatic effects on the geometry of the space. For instance, in the normed space, we use the $\ell_p$ norm, denoted as $\lVert .\rVert_p$, where $p \geq 1$ is some real number. In this space, the choice of $p$ determines the geometry as shown in Figure \ref{fig:lq_norm}. We also see from Figure 1, some ``nesting" or ``ordering" behavior of the $\ell_p$ norm.
\begin{figure}[htb!]
    \centering
    \includegraphics[width=0.8\linewidth]{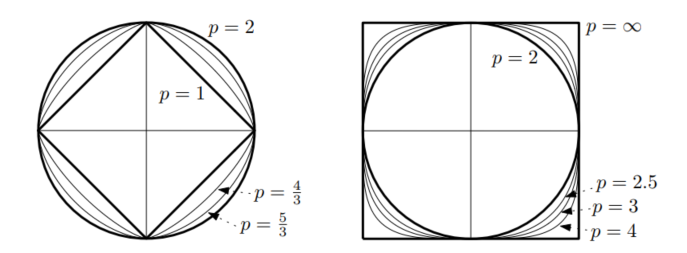}
    \caption{Shape of the unit ball in $\R^2$ for $p\in [1,\infty)$ from Section 1.3 of \cite{matousek2013}. }
    \label{fig:lq_norm}
\end{figure}

\subsection{Popular metric functions in some common metric spaces}
\subsubsection{Euclidean space}
In this space, the response set $\bOmega_Y \in\R^r$, where $r \geq 1$ is some positive integer. The most common metric in this space is the $\ell_p$ norm, which as shown in Figure \ref{fig:lq_norm} exhibits nesting.

\begin{proposition}\label{lp_ordering}
For $p \geq 1$, let $\mathbb{L}_p$ denote the class of all finite $\ell_p$ metrics on $\bOmega_Y$. Then $\mathbb{L}_{2} \subset \mathbb{L}_{p}$ for every $p$.
\end{proposition}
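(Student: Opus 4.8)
The plan is to read the set inclusion as a statement about isometric embeddings. An element of $\mathbb{L}_2$ is a metric on a finite set $\{1,\dots,N\}$ whose distance matrix $D=(D_{ij})$ can be realized as $D_{ij}=\lVert v_i-v_j\rVert_2$ for some points $v_1,\dots,v_N\in\R^d$, and I must show that the same matrix is realizable as $D_{ij}=\lVert u_i-u_j\rVert_p$ for suitable points $u_1,\dots,u_N$ in some $\ell_p$ space. It therefore suffices to produce, for each fixed $p\ge 1$, an isometric embedding of $(\R^d,\lVert\cdot\rVert_2)$ into $\ell_p$ and to push the finite point set through it: the metric inherited by $\{v_i\}$ is then visibly an $\ell_p$ metric, giving $D\in\mathbb{L}_p$.

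First I would build the embedding from the $2$-stability of the Gaussian law. Let $g_1,\dots,g_d$ be i.i.d.\ standard normal variables. For any $a=(a_1,\dots,a_d)\in\R^d$ the combination $\sum_{k=1}^d a_k g_k$ is again Gaussian with mean $0$ and variance $\lVert a\rVert_2^2$, so that
\[
\E\Big|\sum_{k=1}^d a_k g_k\Big|^p=c_p\,\lVert a\rVert_2^{\,p},\qquad c_p:=\E|g_1|^p<\infty .
\]
Consequently the linear map $\Phi(v)=c_p^{-1/p}\sum_{k=1}^d v_k g_k$, viewed as taking values in $L_p$ of the underlying probability space, satisfies $\lVert\Phi(u)-\Phi(v)\rVert_{L_p}=\lVert u-v\rVert_2$ for all $u,v$. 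Applying $\Phi$ to $v_1,\dots,v_N$ already realizes $D$ as an $L_p$ metric, which settles the inclusion if one reads $\mathbb{L}_p$ as the class of finite $L_p$-realizable metrics.

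To land inside a genuine finite-dimensional $\ell_p$ space I would then discretize the Gaussian integral. Only the finitely many functions $x\mapsto|\langle v_i-v_j,x\rangle|^p$, $1\le i<j\le N$, need to be integrated against the standard Gaussian measure $\gamma$ on $\R^d$; by a Tchakaloff/Carath\'eodory-type quadrature there exist finitely many nodes $y_1,\dots,y_m\in\R^d$ and weights $w_1,\dots,w_m>0$ reproducing all of these integrals exactly. Setting $u_i=c_p^{-1/p}\big(w_l^{1/p}\langle v_i,y_l\rangle\big)_{l=1}^m\in\R^m$ equipped with $\lVert\cdot\rVert_p$ then gives $\lVert u_i-u_j\rVert_p^p=c_p^{-1}\sum_l w_l|\langle v_i-v_j,y_l\rangle|^p=c_p^{-1}\E|\langle v_i-v_j,g\rangle|^p=\lVert v_i-v_j\rVert_2^{\,p}$, so $D\in\mathbb{L}_p$. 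The case $p=\infty$ (outside the stated range of real $p$) is separate but easier, since every finite metric embeds isometrically into some $\ell_\infty^m$.

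The clean part is the Gaussian computation, which is exact for every $p\ge 1$ and is the real engine of the argument. I expect the main obstacle to be the passage from $L_p$ to a finite-dimensional $\ell_p^m$: the quadrature step must be justified, namely the finiteness of the node set, the positivity of the weights, and exactness on the required functions. If one is content to read $\mathbb{L}_p$ as the class of finite $L_p$-realizable metrics, this obstacle disappears, because for finite metrics $\ell_p$- and $L_p$-embeddability are known to coincide; otherwise the discretization is the one place where genuine care is needed.
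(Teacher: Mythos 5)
Your proof is correct, but it takes a genuinely different route from the one in the paper. You read $\mathbb{L}_p$ as the class of finite metrics isometrically realizable in some $\ell_p^m$ --- which is the reading used in the cited source \cite{matousek2013} and the one that actually supports the paper's later remarks about nesting and richness of metric classes --- and you give the standard embedding argument: Gaussian $2$-stability yields an exact isometric embedding of $(\R^d,\lVert\cdot\rVert_2)$ into $L_p$, and a Carath\'eodory/Tchakaloff quadrature on the finitely many functions $x\mapsto\lvert\langle v_i-v_j,x\rangle\rvert^p$ discretizes this into a finite-dimensional $\ell_p^m$. The one step you flag as delicate (finitely many nodes with positive weights, exact on the required integrals) is indeed the only nontrivial passage, and it is standard; your argument is complete. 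The paper instead interprets $\mathbb{L}_p$ as a Lebesgue class of functions with finite $p$-th moment on a probability space $(\bOmega_Y,\rho,\cP)$ and compares norms directly: a H\"older-type bound is used to claim $\lVert f\rVert_2\le\lVert f\rVert_p$ for $1\le p<2$, and Littlewood's interpolation inequality handles $q>2$. That route is much shorter, but it proves a statement about function-space norms rather than about embeddability of finite metrics, and its H\"older step is applied with the exponent $p/2<1$, so the direction of the resulting inequality deserves scrutiny (on a probability space the Lyapunov ordering is $\lVert f\rVert_p\le\lVert f\rVert_2$ for $p\le 2$, not the reverse). In short: your version buys a rigorous proof of the embedding-theoretic claim the proposition is meant to express, at the cost of invoking Gaussian stability and a quadrature lemma; the paper's version buys brevity by working with $L_p$ norms of functions, but it establishes a different (and, as written, questionable) containment.
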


\noindent By Proposition \ref{lp_ordering}, $\ell_2$ is the most restrictive of the $\ell_p$ norms, and in a finite normed space, the {\it richness} of $\mathbb{L}_p$ depends on the choice of metric. In particular, for $1 \leq p_1 < p_2 \leq 2, \ \mathbb{L}_{p_2} \subset \mathbb{L}_{p_1}$ [\cite{matousek2013}].  This containment relation can have significant effects on the central space estimation as we will show later. This proposition also suggests that metrics in different metric spaces that depend on the $\ell_p$ norm will exhibit some of this behavior. 

Here, we will focus on the $\ell_1$ and $\ell_2$ norms. Thus, by Proposition \ref{lp_ordering}, in the finite Euclidean space, $\mathbb{L}_2 (Y) \subset \mathbb{L}_1(Y)$. Besides these two metrics, for $r \geq 2$, we will also consider the Mahalanobis distance, which is a pseudometric that takes into account the dependence between the responses. The Mahalanobis distance is given by
\begin{align*}
    M_d(y,y') = \sqrt{(y-y')^\top\bm{S}^{-1}(y-y')},
\end{align*}
where $\bm{S} \in \R^{r\times r}$ is the covariance matrix between $y$ and $y'$.\\ 

\subsubsection{Space of probability distributions}
In this space, $\bOmega_Y \in \mathcal{P}(Y)$, where $\mathcal{P}(Y)$ is the set of probability measures. The popular metric here is the Wasserstein distance. Let $y, y'$ be two probability measures on $\R^d, d \geq 1$ with finite $p$th-moment. The $p$-Wasserstein is given by 
\begin{align}
	W_p(y,y') = \inf_{\mathcal{F}} \big (\E\lVert y - y'\rVert_p \big )^{1/p}, \ p \geq 1,
	\label{wasserstein}
\end{align}
where $\mathcal{F}$ is the collection of all joint distributions of $y$ and $y'$. 

In particular, for $y,y' \in \R$, $W_p(.)$ can be defined in terms of quantiles as 
\begin{align}
 W_p(y,y') = \lVert F_{y}^{-1} - F_{y'}^{-1}\rVert_p = \left(\int_0^1 \big\lvert F_{y}^{-1}(\uptau)  - F_{y'}^{-1}(\uptau)\big\rvert^p d\uptau \right )^{1/p},
\label{quad_wasserstein}
\end{align}
where $F_{y}^{-1}$ and $F_{y'}^{-1}$ are the respective quantile functions of $y$ and $y'$. Thus, the Wasserstein distance is essentially the $\ell_p$ norm of the quantile functions of the distributions. This implies that we can obtain several versions of the Wasserstein metric by changing $p$. The Wasserstein distance also exhibits some ordering behavior.

\begin{proposition}\label{Wasserstein_ordering}
For $p > 1$, we have $W_1(y,y') \leq  W_p(y,y')$, $\forall y, y' \in \R$.
\end{proposition}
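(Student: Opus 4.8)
The plan is to read the inequality off directly from the quantile representation of the Wasserstein distance in (\ref{quad_wasserstein}). Writing $g(\uptau) = \lvert F_{y}^{-1}(\uptau) - F_{y'}^{-1}(\uptau)\rvert$ for $\uptau \in [0,1]$, that representation says precisely that $W_p(y,y') = \lVert g \rVert_{L_p([0,1])}$, where the interval $[0,1]$ carries Lebesgue measure. The crucial observation is that Lebesgue measure on $[0,1]$ is a \emph{probability} measure (it has total mass one), so the claim $W_1(y,y') \leq W_p(y,y')$ is exactly the statement that $L_p$ norms are nondecreasing in $p$ on a probability space.

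First I would establish the generic inequality $\lVert g \rVert_1 \leq \lVert g \rVert_p$ for $p > 1$. The cleanest route is Jensen's inequality: since $t \mapsto t^p$ is convex on $[0,\infty)$ for $p > 1$ and $d\uptau$ integrates to one on $[0,1]$, we have
\begin{align*}
\left( \int_0^1 g(\uptau)\, d\uptau \right)^p \leq \int_0^1 g(\uptau)^p\, d\uptau .
\end{align*}
Taking $p$th roots of both sides preserves the inequality because $t \mapsto t^{1/p}$ is increasing, yielding $\int_0^1 g\, d\uptau \leq \left( \int_0^1 g^p\, d\uptau \right)^{1/p}$. An equally short alternative is H\"older's inequality with conjugate exponents $p$ and $p/(p-1)$ applied to the product $g \cdot 1$, which again uses only that the measure has total mass one.

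Finally I would substitute back $g(\uptau) = \lvert F_{y}^{-1}(\uptau) - F_{y'}^{-1}(\uptau)\rvert$ to recognize the left side as $W_1(y,y')$ and the right side as $W_p(y,y')$, completing the argument for arbitrary $y, y' \in \R$. I do not anticipate a genuine obstacle: the entire content is the reduction to a one-dimensional $L_p$-norm comparison on a probability space via (\ref{quad_wasserstein}), after which a standard convexity inequality finishes the job. The only point warranting a word of care is that the relevant integrals be finite, which is guaranteed by the standing assumption that $y$ and $y'$ have finite $p$th moment, so that $g \in L_p([0,1])$ and both norms are well defined.
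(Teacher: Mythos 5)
Your proof is correct, and it rests on the same underlying fact as the paper's: monotonicity of $L_p$ norms on a probability space. The difference is only in which representation of $W_p$ you apply it to. The paper works from the coupling definition in (\ref{wasserstein}) and cites Lyapunov's inequality $\big(\E\lvert y-y'\rvert^{p_1}\big)^{1/p_1}\leq\big(\E\lvert y-y'\rvert^{p_2}\big)^{1/p_2}$, which implicitly requires one extra step it does not spell out: the inequality holds for each fixed coupling, and must then be passed through the infimum over $\mathcal{F}$ (valid, since a pointwise inequality between two functionals is preserved by taking infima of each side). You instead use the univariate quantile representation (\ref{quad_wasserstein}), which turns both distances into explicit $L_p([0,1])$ norms of the single function $g(\uptau)=\lvert F_y^{-1}(\uptau)-F_{y'}^{-1}(\uptau)\rvert$ under Lebesgue (probability) measure, so Jensen or H\"older applies directly with no infimum to manage. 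Your route is confined to $y,y'\in\R$ because the quantile formula is one-dimensional, but that is exactly the scope of the proposition; the paper's coupling argument would extend to higher dimensions. Both are complete; yours is arguably the more self-contained for the stated claim, and your closing remark about finiteness of the $p$th moment correctly addresses the only integrability point worth flagging.
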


\noindent By Proposition \ref{Wasserstein_ordering}, the richness of the $W_p(.)$ class in finite spaces depends on the choice of $p$. This proposition also implies that $W_1(.) \leq W_2(.)$, which will be the focus of this paper. Note, as a special case, the 1-Wasserstein for $y,y'\in\R$ can also be expressed as the area between their marginal cumulative distributions, i.e., $W_1(y,y') = \int_{-\infty}^\infty \big\lvert F_{y}(\uptau)  - F_{y'}(\uptau)\big\rvert d\uptau$, see \cite{de20211}. 

The 2-Wasserstein also has a closed form for some special family of distributions. Suppose distribution $y$ has (location, scale) parameters $(\mu_y, \sigma_y)$ and  distribution $y'$ has corresponding parameters $(\mu_{y'}, \sigma_{y'})$. \cite{irpino2007optimal} showed that the squared $W_2(y, y')$ can be decomposed as
\begin{align}
    W^2_2(y, y') = \underbrace{(\mu_y - \mu_{y'})^2}_{\text{Location}} + \underbrace{(\sigma_y - \sigma_{y'})^2}_{\text{Scale}} + \underbrace{2\sigma_y\sigma_{y'} \big(1-Corr(y,y')\big)}_{\text{Shape}}.
    \label{W2_decomp}
\end{align}
Based on (\ref{W2_decomp}), $W_2(.)$ changes proportionally to the shift in the location and scale parameters, which make $W_2(.)$ sensitive to the geometry of the differences between the input distributions. Thus, $W_2(.)$ is susceptible to small outlying masses. Moreover, while the Wasserstein metric considers the support and shape of the distribution, it does appear to be better suited for comparing symmetric distributions although it can also handle asymmetric distributions. 

\begin{proposition}\label{stochastic_dominance}
Let the random variables $Y_1, Y_2 \in \R$ be such that $Y_1 \succeq_1 Y_2$, where $\succeq_1$ means stochastic dominance in the 1st order. Then, $W_1(Y_1, Y_2) = \E(Y_1) - \E(Y_2)$.
\end{proposition}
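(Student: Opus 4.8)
The plan is to exploit the quantile representation of the $1$-Wasserstein distance in (\ref{quad_wasserstein}) together with the quantile characterization of first-order stochastic dominance. First I would recall that $Y_1 \succeq_1 Y_2$ is equivalent to $F_{Y_1}(t) \leq F_{Y_2}(t)$ for every $t \in \R$, and that this is in turn equivalent to the pointwise ordering of the quantile functions, $F_{Y_1}^{-1}(\uptau) \geq F_{Y_2}^{-1}(\uptau)$ for all $\uptau \in (0,1)$. This equivalence is the key structural input, since it is precisely what lets me discharge the absolute value inside the quantile integral.

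Next I would set $p=1$ in (\ref{quad_wasserstein}) to write $W_1(Y_1, Y_2) = \int_0^1 \big\lvert F_{Y_1}^{-1}(\uptau) - F_{Y_2}^{-1}(\uptau)\big\rvert \, d\uptau$. Because the dominance ordering makes the integrand nonnegative even without the modulus, this collapses to $\int_0^1 \big(F_{Y_1}^{-1}(\uptau) - F_{Y_2}^{-1}(\uptau)\big)\, d\uptau$. Splitting the integral and invoking the identity $\E(Y) = \int_0^1 F_Y^{-1}(\uptau)\, d\uptau$ then yields $\E(Y_1) - \E(Y_2)$ directly, which completes the argument.

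Alternatively, I could run the same argument on the distribution-function side, using the area representation $W_1(y,y') = \int_{-\infty}^\infty \big\lvert F_y(\uptau) - F_{y'}(\uptau)\big\rvert\, d\uptau$ quoted after Proposition \ref{Wasserstein_ordering}. Dominance gives $F_{Y_1} \leq F_{Y_2}$, so the integrand is simply $F_{Y_2} - F_{Y_1}$, and the layer-cake identity $\E(Y) = \int_0^\infty \big(1 - F_Y(\uptau)\big)\, d\uptau - \int_{-\infty}^0 F_Y(\uptau)\, d\uptau$ recovers $\E(Y_1) - \E(Y_2)$ after recombining the positive and negative half-axes. Either route is short, and I would present the quantile version as the main one since it reads off (\ref{quad_wasserstein}) immediately.

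The only genuine subtlety, rather than a true obstacle, is integrability: both $W_1$ and the expectations are finite only when $Y_1, Y_2$ have finite first moments, which is already built into the definition (\ref{wasserstein}) through the finite $p$th-moment requirement. I would also keep light care at the endpoints $\uptau \in \{0,1\}$ and at any atoms, where $F_Y^{-1}$ must be read as a generalized inverse; since these contribute a measure-zero set to the integral, they do not affect the conclusion.
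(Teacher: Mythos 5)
Your argument is correct, and it takes a genuinely different (and cleaner) route than the paper's own proof. The paper works on the coupling side: it writes $W_1(Y_1,Y_2)=\E|Y_1-Y_2|$ and then conditions on the event $\{Y_2<Y_1\}$ to drop the absolute value. That version leans on the unstated fact that the infimum in the definition of $W_1$ is attained by the comonotone coupling $(F_{Y_1}^{-1}(U),F_{Y_2}^{-1}(U))$, under which first-order dominance forces $Y_1\geq Y_2$ almost surely; without naming that coupling, the identity $W_1=\E|Y_1-Y_2|$ and the subsequent conditioning step are not fully justified as written. Your proof instead reads the one-dimensional quantile representation (\ref{quad_wasserstein}) at $p=1$, uses the equivalence of $Y_1\succeq_1 Y_2$ with $F_{Y_1}^{-1}(\uptau)\geq F_{Y_2}^{-1}(\uptau)$ for a.e.\ $\uptau\in(0,1)$ to discharge the modulus, and finishes with $\E(Y)=\int_0^1 F_Y^{-1}(\uptau)\,d\uptau$. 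This buys you rigor at no cost: every step is an identity already stated in the paper, the optimal-coupling subtlety is absorbed into (\ref{quad_wasserstein}), and your remarks on integrability and generalized inverses cover the only delicate points. The two arguments are of course two faces of the same fact, since the comonotone coupling is exactly the joint pushforward of the uniform law under the two quantile maps, but your presentation is the one I would recommend keeping.
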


\noindent By Proposition \ref{stochastic_dominance}, when there is 1st-order stochastic dominance, $W_1(.)$ simply reduces to the difference in means. For example, for any two log-normal distributions $F_1(\mu_1, \sigma)$ and $F_2(\mu_2, \sigma)$, $F_1$ dominates $F_2$ if $\mu_1 > \mu_2$. In such scenarios, using the $W_1(.)$ is preferable to $W_2(.)$ as the latter will likely introduce some noise in finite samples based on the decomposition in (\ref{W2_decomp}).

Another metric we will consider in the probability space is the Hellinger distance. The Hellinger distance is given by
\begin{align}
    H(y,y') = 1 - \int \sqrt{P_y(\omega)P_{y'}(\omega)}  \partial \omega,
\end{align}
where $P_y$ and $P_{y'}$ are continuous probability distributions. For discrete distributions, we replace the integral with summation. Unlike the Wasserstein distance, the Hellinger distance is bounded between 0 and 1, and it puts more emphasis on tail variations than the center. \\

\subsubsection{Space of symmetric positive definite matrices}
In this space, $\bOmega_Y \in {\bm S}_r^{+} $, where ${\bm S}_r^{+}$ denote the space of $r\times r$ symmetric positive definite matrices. Here, the popular metric is the Frobenius norm. For an $m\times n$ matrix $\A=[a_{ij}]$, the Frobenius norm is given by $\lVert \A \rVert_F = \left (\displaystyle\sum_{i=1}^m\sum_{j=1}^n a^2_{ij}\right )^{1/2}$. Because each matrix entry is squared, the Frobenius norm is susceptible to the influence of outliers and noise. Therefore, we also consider two derivatives of the norm based on the scaled versions of the original matrices. Our choice of scaling functions are the ``square root" i.e., Cholesky decomposition and the logarithm.

Therefore, for response matrices $y, y' \in \bm S_r^{+}$, we consider the following norms:
\begin{align*}
    \text{Frobenius norm (F):} & \ \lVert  y - y' \rVert_F, \\
    \text{Cholesky Frobenius norm (Chol F):} &  \ \lVert  \mathrm{Chol}(y) - \mathrm{Chol}(y') \rVert_F, \\
    \text{Log Frobenius norm (Log F):} & \ \lVert  \log_m(y) - \log_m(y') \rVert_F,
\end{align*}
where $\mathrm{Chol}(\A)=\L\L^\top$ is the Cholesky decomposition of matrix $\A$ for some lower triangular matrix $\L$ and $\log_m(.)$ is the matrix logarithm. 

\begin{proposition}\label{spdm_prop}
For any $y, y' \in  \bm S_r^{+}$,  $\lVert  y - y' \rVert_F   \leq  \lVert  \mathrm{Chol}(y) - \mathrm{Chol}(y') \rVert_F$  if the $\max\{\lVert \L_y\rVert_F, \lVert \L_{y'}\rVert_F \} \geq 0.5$, where $y=\L_y\L_y^\top$ is the Cholesky decomposition of $y$.
\end{proposition}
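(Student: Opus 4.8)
The plan is to exploit the bilinear structure of the map $L \mapsto LL^\top$ via a telescoping identity, and then to control the product difference by a multiple of the factor difference using submultiplicativity of the Frobenius norm. Write $L_y, L_{y'}$ for the Cholesky factors, so that $y = L_y L_y^\top$ and $y' = L_{y'}L_{y'}^\top$, and set $\Delta = L_y - L_{y'}$. First I would record the identity
\begin{align*}
y - y' = L_y L_y^\top - L_{y'} L_{y'}^\top = L_y \Delta^\top + \Delta L_{y'}^\top,
\end{align*}
obtained by adding and subtracting $L_y L_{y'}^\top$. This is the only place the symmetric positive definite / Cholesky structure enters, and it reduces the claim to a purely normed-space estimate.

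Next I would apply the triangle inequality followed by the submultiplicative bound $\lVert AB\rVert_F \leq \lVert A\rVert_F \lVert B\rVert_F$, together with $\lVert A^\top\rVert_F = \lVert A\rVert_F$, to each summand. This gives
\begin{align*}
\lVert y - y'\rVert_F \leq \lVert L_y\rVert_F \lVert \Delta\rVert_F + \lVert \Delta\rVert_F \lVert L_{y'}\rVert_F = \big(\lVert L_y\rVert_F + \lVert L_{y'}\rVert_F\big)\,\lVert L_y - L_{y'}\rVert_F .
\end{align*}
Thus the product difference is dominated by the factor difference times the sum of factor norms, and bounding $\lVert L_y\rVert_F + \lVert L_{y'}\rVert_F \leq 2\max\{\lVert L_y\rVert_F, \lVert L_{y'}\rVert_F\}$ isolates exactly the quantity named in the hypothesis.

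The main obstacle — indeed the crux — is reconciling the direction of the threshold. The estimate above yields the claimed inequality $\lVert y - y'\rVert_F \leq \lVert L_y - L_{y'}\rVert_F$ precisely when the prefactor $\lVert L_y\rVert_F + \lVert L_{y'}\rVert_F$ is at most $1$, for which $\max\{\lVert L_y\rVert_F,\lVert L_{y'}\rVert_F\} \leq 1/2$ is the natural sufficient condition, the reverse of the stated hypothesis $\max\{\lVert L_y\rVert_F,\lVert L_{y'}\rVert_F\} \geq 1/2$. Before attempting to patch the constant — for instance by replacing the coarse Frobenius submultiplicativity with the sharper $\lVert AB\rVert_F \leq \lVert A\rVert_{2}\lVert B\rVert_F$, where $\lVert\cdot\rVert_2$ is the spectral norm, or by exploiting the lower-triangular form of $\Delta$ — I would stress-test the statement in the scalar case $r=1$, where $L_y = \sqrt{y}$ and the assertion becomes $|y - y'| = |\sqrt y - \sqrt{y'}|\,(\sqrt y + \sqrt{y'}) \leq |\sqrt y - \sqrt{y'}|$, i.e.\ $\sqrt y + \sqrt{y'} \leq 1$. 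This makes transparent that the governing quantity is the sum of the factor norms and that the favourable regime is $\max\{\lVert L_y\rVert_F,\lVert L_{y'}\rVert_F\}$ small rather than large; since the one-dimensional reduction shows no sharpening can deliver the conclusion uniformly under $\max\{\lVert L_y\rVert_F,\lVert L_{y'}\rVert_F\} \geq 1/2$, I would verify whether the hypothesis is intended to read $\max\{\lVert L_y\rVert_F,\lVert L_{y'}\rVert_F\} \leq 1/2$ before finalizing the argument.
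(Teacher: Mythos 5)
Your computation is line-for-line the paper's own proof: the same add-and-subtract of $L_y L_{y'}^\top$, the same triangle-inequality-plus-submultiplicativity step, and the same final bound $\lVert y - y'\rVert_F \leq 2\lambda_{\max}\lVert L_y - L_{y'}\rVert_F$ with $\lambda_{\max} = \max\{\lVert L_y\rVert_F, \lVert L_{y'}\rVert_F\}$. Where you diverge is at the last inference, and you are right to diverge: the paper concludes ``the desired result is achieved for $\lambda_{\max} \geq 0.5$,'' which is a non sequitur, since an upper bound of $2\lambda_{\max}\lVert L_y - L_{y'}\rVert_F$ yields $\lVert y - y'\rVert_F \leq \lVert L_y - L_{y'}\rVert_F$ only when $2\lambda_{\max} \leq 1$, i.e.\ $\lambda_{\max} \leq 0.5$ --- exactly the reversal you flag. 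Your scalar stress test settles the matter conclusively: for $r = 1$, $y = 4$, $y' = 1$, one has $\max\{\lVert L_y\rVert_F, \lVert L_{y'}\rVert_F\} = 2 \geq 0.5$ yet $\lVert y - y'\rVert_F = 3 > 1 = \lVert L_y - L_{y'}\rVert_F$, so the proposition as stated is simply false, and no sharpening of the submultiplicative step (spectral-norm bounds, exploiting triangularity of $\Delta$) can rescue it. The hypothesis should read $\max\{\lVert L_y\rVert_F, \lVert L_{y'}\rVert_F\} \leq 0.5$, or more sharply $\lVert L_y\rVert_F + \lVert L_{y'}\rVert_F \leq 1$, which your chain shows is the exact sufficient condition this argument delivers; with that correction your proof --- which is the paper's --- is complete.
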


\subsubsection{Space of networks}
In this space, $\bOmega_Y \in \mathcal{G}$, the collection of networks or graphs. A graph $G=(V,E)$, where $V=\{\text{vertices or nodes}\}$ and $E=\{\text{edges or links}\}$. For this space, we consider the centrality distances of \cite{roy2014modeling}. The centrality distance between two networks $G_1=(V_1,E_1)$ and $G_2=(V_2,E_2)$ is given by 
    \begin{align}
        C_\rho(G_1,G_2) = \displaystyle\sum_{v\in V} |C(G_1,v) - C(G_2,v)|,
        \label{cent_dist}
    \end{align}
where $C(G_i,v)$ is some centrality measure of network $G_i$ for node $v$. Thus, $C_\rho(G_1,G_2)$ is essentially the $\ell_1$ norm of the difference in centralities. There are various centrality measures but will only focus on the degree centrality, denoted as $C_d$ and the closeness centrality denoted as $C_c$. We expect $C_d$ to be influenced by the degree distribution of the networks and $C_c$ to depend on the existence of hubs and clusters.

Another metric we will consider in this space is the diffusion distances of \cite{hammond2013graph} given by
\begin{align}
    D_d(G_1,G_2) = \max_\uptau \big(\lVert \mathrm{expm}(-\uptau\L_1) - \mathrm{expm}(-\uptau\L_2)\rVert_F^2\big)^{1/2},
\label{diff_dist}
\end{align}
where $\L_i$ is the Laplacian of graph $G_i$, $\uptau \in (0,1)$, and $\mathrm{expm}(\A)$ is the exponential of matrix $\A$ given by $\mathrm{expm}(\A) = \displaystyle\sum_{n=0}^\infty \cfrac{\A^n}{n!}$. Unlike the degree centrality distance, the diffusion distance can account for edge weight and other network characteristics beside node characteristics.

We will now proceed to show how the competing metrics in these popular metric spaces influence the dimension reduction subspace for Fr\'echet regression in the next section.

\section{Fr\'echet sufficient dimension reduction}
\subsection{A brief background}
The estimates of the global and local conditional Fr\'echet means in model (\ref{frechet_reg}) are based on the least squares assumption, which can be restrictive in many scenarios, the obvious one being the recovery of a single direction. Fortunately, we can minimize some of the restrictions by estimating the central mean space instead. 

A subspace $\bEta \in \R^{k\times d}$ is a mean dimension reduction subspace for the regression between $Y$ and $\X$ if it satisfies 
\begin{align}
    \E(Y|\X) = \E(Y|\bEta^\top\X),
    \label{cms}
\end{align}
where $\bEta^\top\bEta = \I_d$, with $d < k$ as the structural dimension. Although $\bEta$ is not identifiable, the intersection of all such $\bEta$ that satisfies (\ref{cms}) is shown to exist and is called the central mean space, denoted as $\cms = \mathrm{span}(\bEta)$ [\cite{cook1996graphics} and \cite{yin2008successive}].

While the conditional mean is the focus of most applications, the relationship between $Y$ and $\X$ sometimes goes beyond the mean. To account for regression relationships beyond the mean, we can estimate a broader subspace called the {\it central space}. In general,  $\bEta\in\R^{k\times d}$ is a dimension reduction subspace if it is such that
\begin{align}
	Y \indep \X |\bEta^\top\X,
	\label{sdr}
\end{align}
where $\indep $ means statistical independence and $\bEta^\top\bEta = \I_d$. Again, $\bEta$ is not unique in (\ref{sdr}). However, the central space denoted as $\spc = \mathrm{span}(\bEta)$ is identifiable if it exists, under the same conditions outlined in \cite{cook1996graphics} and \cite{yin2008successive}. The central space contains the central mean space, i.e., $\cms \subseteq \spc$. 

In dimension reduction for Fr\'echet regression, we may not be able to use $Y$ directly in the estimation as $Y$ may not satisfy the vector space properties, such as inner products on which most SDR estimators are based. Therefore, we utilize the embeddings of $(\bOmega_Y, \rho)$ as the surrogate for $Y$.

\begin{definition}[Metric embedding]
For some metric spaces $(\bOmega_Y,\rho_Y)$ and $(\bOmega_z,\rho_Z)$, consider the mapping $g:(\bOmega_Y,\rho_Y) \to (\bOmega_Z,\rho_Z)$ such that $\forall y,y'\in\bOmega_Y$ we have
\begin{itemize}
    \item[(a)] $\rho_Y(y,y') = \rho_Z\big(g(y),g(y')\big)$;
    \item[(b)] $(1-\epsilon)\rho_Y(y,y') \leq \rho_Z\big(g(y),g(y')\big) \leq (1+\epsilon)\rho_Y(y,y')$, for some $0 < \epsilon < 1$; 
    \item[(c)] $\rho_Y(y,y') \leq C\rho_Z\big(g(y),g(y')\big)$, for some constant $C$.
\end{itemize}
The mapping $g$ is called an {\bf isometric embedding} if it satisfies (a). If $g$ only holds for (b), it is called an {\bf $\epsilon$-almost isometry}. Lastly, the mapping in  (c) is called a {\bf $C$-Lipschitz} map.
\end{definition}

\noindent Depending on the choice of embedding of $(\bOmega_Y, \rho)$ employed, we can broadly classify existing Fr\'echet SDR techniques into kernel and kernel-free methods. 

{\bf Kernel methods:} The methods in this class assume that there is a continuous isometric embedding $g :\bOmega_Y \mapsto \mathcal{H}_Y$, where $\mathcal{H}_Y$ is a Hilbert space dense in the family of continuous real-valued functions of $\bOmega_Y$. Thus, for any $y, y' \in \bOmega_Y$ the surrogate response $\widetilde Y = \kappa\left(\rho(y, y')\right)$ can be obtained, where $\kappa(.)$ is a positive definite \emph{universal} kernel and $\rho$ is an appropriate metric on $\bOmega_Y$.

To satisfy the continuous embedding condition, it is sufficient for $(\bOmega_Y, \rho)$ to be complete and separable, which holds if $\rho$ is of the {\it negative type}. However, whether or not $\kappa(.)$ is universal depends on the metric space being considered. For instance, in the space of univariate probability distributions, the popular Gaussian radial basis function kernel is universal. However, the same is not true for spheres. Therefore, for kernel Fr\'echet SDR estimators, the choice of metric and kernel type is specific to the metric space. Examples of methods that fall into this category include the Fr\'echet kernel sliced inverse regression of \cite{dong2022frechet} and the extension of existing SDR for Euclidean response to Fr\'echet SDR for random responses by \cite{zhang2023dimension}. 

{\bf Kernel-free methods:} This class of estimators utilize approximate Euclidean embeddings of $\bOmega_Y$, i.e., $g:\bOmega_Y \to \R$. Thus, for $y,y'\in \bOmega_Y$, the surrogate response is given by $\widetilde Y = T(\rho(y, y'))$, where $T$ is a linear transformation. This linear transformation is simply a random univariate projection of the pairwise distance matrix in finite samples. Because the embedding does not have to be strictly isometric, several metric functions including true metrics and pseudometrics are applicable. An example of this technique can be found in \cite{soale2023data}.

By the transformational properties of the $\spc$, the containment property of the metrics discussed in Section 2, can have some influence on the $\cms$ and $\spc$ estimates, regardless of the estimation technique employed. Moreover, for the $\cms$, the estimates depends on the concentration of the embedding. That is, if the embedding does not concentrate around the center of the ``measure", the $\cms$ methods will not be able to recover the regression subspace. 

\begin{theorem}\label{containment}
Let $\rho_1$ and $\rho_2$ be metrics defined on $\bOmega_Y$. 
Define functional spaces $\mathcal{L}_{\rho_1}(\bOmega_Y) = \big\{f: f(\rho_1(\bOmega_Y)) < \infty \big\}$ and  $\mathcal{L}_{\rho_2}(\bOmega_Y) = \big\{f: f(\rho_2(\bOmega_Y)) < \infty \big\}$ for some measurable function $f$ with finite moments. If $\mathcal{L}_{\rho_1}(\bOmega_Y) \subseteq \mathcal{L}_{\rho_2}(\bOmega_Y)$, then $\mathcal{S}_{\rho_2} \subseteq \mathcal{S}_{\rho_1}$, where $\mathcal{S}_\rho$ is the dimension reduction subspace for the regression between $f(\rho(\bOmega_Y))$ and $\X$.
\end{theorem}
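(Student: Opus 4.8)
The plan is to reduce the statement to the transformational (contraction) property of the central space and then use the hypothesis $\mathcal{L}_{\rho_1}(\bOmega_Y) \subseteq \mathcal{L}_{\rho_2}(\bOmega_Y)$ to verify the premises of that property. Recall that $\mathcal{S}_{\rho}$ is the minimal subspace $\mathrm{span}(\bEta)$ satisfying $f(\rho(\bOmega_Y)) \indep \X \mid \bEta^\top\X$, i.e. the intersection of all dimension reduction subspaces for the surrogate response $f\circ\rho$. A standard property of conditional independence is that it is preserved under measurable transformations of the response: if $f(\rho_1(\bOmega_Y)) \indep \X \mid \bEta^\top\X$ and the surrogate $f(\rho_2(\bOmega_Y))$ is $\sigma\big(f(\rho_1(\bOmega_Y))\big)$-measurable, then $f(\rho_2(\bOmega_Y)) \indep \X \mid \bEta^\top\X$ as well. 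Consequently every dimension reduction subspace for $f\circ\rho_1$ is also one for $f\circ\rho_2$, and taking the intersection over all qualifying $\bEta$ yields $\mathcal{S}_{\rho_2} \subseteq \mathcal{S}_{\rho_1}$. Thus the theorem follows once the measurability/factorization claim is established.

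The bridge step is therefore to show that the information carried by the $\rho_2$-surrogate is subordinate to that of the $\rho_1$-surrogate, i.e. $\sigma\big(f(\rho_2(\bOmega_Y))\big) \subseteq \sigma\big(f(\rho_1(\bOmega_Y))\big)$. First I would make the hypothesis operational: by definition $\mathcal{L}_\rho = \{f : f(\rho(\bOmega_Y)) < \infty\}$ collects the admissible (finite-moment) functionals of the $\rho$-distances, so the containment $\mathcal{L}_{\rho_1} \subseteq \mathcal{L}_{\rho_2}$ says that admissibility under $\rho_1$ is the more stringent requirement, which signals that $\rho_1$ carries the finer/more informative distance structure. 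I would connect this to the pointwise ordering of the metrics established in Section 2 (Propositions \ref{lp_ordering} and \ref{Wasserstein_ordering}, where $\mathbb{L}_2\subset\mathbb{L}_1$ and $W_1 \le W_p$): in the canonical cases the inclusion of function classes arises exactly because $\rho_2 \le \rho_1$, so that the $\rho_2$-surrogate is a coarsening of the $\rho_1$-surrogate. I would then exhibit a measurable map $h$ with $f(\rho_2(\bOmega_Y)) = h\big(f(\rho_1(\bOmega_Y))\big)$ almost surely, realizing both surrogates as functions of a common embedded coordinate via the Euclidean/isometric embeddings guaranteed for $(\bOmega_Y,\rho)$ and checking that the $\rho_1$-embedding refines the $\rho_2$-embedding.

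The main obstacle is precisely this bridge: upgrading the set-level inclusion $\mathcal{L}_{\rho_1} \subseteq \mathcal{L}_{\rho_2}$ — a statement about which functionals are finite — to the pointwise, $\sigma$-algebra-level subordination $\sigma(f\circ\rho_2)\subseteq\sigma(f\circ\rho_1)$ that the transformational property demands. The delicate point is that $\mathcal{L}_{\rho_1}$ and $\mathcal{L}_{\rho_2}$ are built from the \emph{same} family of functions $f$ applied to two \emph{different} metrics, so the inclusion must be read as a constraint relating $\rho_1$ and $\rho_2$ rather than as a direct comparison of the surrogates themselves. I expect to need either a monotonicity assumption on $f$ together with a domination $\rho_2\le C\rho_1$ (available through the $C$-Lipschitz and $\epsilon$-almost-isometry embeddings in the metric embedding definition), or a factorization through the conditional expectation $\E\big[f(\rho_2(\bOmega_Y)) \mid f(\rho_1(\bOmega_Y))\big]$ to legitimately produce the transformation $h$. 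Once that factorization is secured, the remaining steps are routine applications of the conditional-independence calculus and the definition of the central space, and the stated inclusion $\mathcal{S}_{\rho_2} \subseteq \mathcal{S}_{\rho_1}$ follows immediately.
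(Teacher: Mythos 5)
There is a genuine gap, and you have in fact located it yourself: the ``bridge'' from the class inclusion $\mathcal{L}_{\rho_1}(\bOmega_Y) \subseteq \mathcal{L}_{\rho_2}(\bOmega_Y)$ to the $\sigma$-algebra subordination $\sigma\big(f(\rho_2(\bOmega_Y))\big) \subseteq \sigma\big(f(\rho_1(\bOmega_Y))\big)$ is never established, and it cannot be established from the stated hypothesis. Knowing which functionals of the $\rho_1$- and $\rho_2$-distances are finite says nothing about whether one surrogate is a measurable function of the other; your own hedging (``I expect to need either a monotonicity assumption \dots\ or a factorization \dots'') concedes that the key step is missing. Worse, even a pointwise domination $\rho_2 \leq C\rho_1$ would not give you the factorization $f\circ\rho_2 = h(f\circ\rho_1)$: two metrics can be mutually bounded and still generate incomparable $\sigma$-algebras after composition with $f$. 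So the conditional-independence-calculus route, as set up, does not go through.

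The underlying issue is a misreading of what $\mathcal{S}_{\rho}$ denotes. You treat it as the central space of a \emph{single} surrogate $f\circ\rho$ for a fixed $f$, which forces you to compare two specific random variables and hence to need the measurable factorization. The paper instead reads $\mathcal{S}_{\rho}$ as the intersection $\bigcap\big\{\mathcal{S}_{f(Y)\mid\X} : f\in\mathcal{L}_{\rho}(\bOmega_Y)\big\}$ over the whole admissible class, and the proof is then a two-line set-theoretic monotonicity argument: since $\mathcal{L}_{\rho_1}\subseteq\mathcal{L}_{\rho_2}$, the family of subspaces indexed by $\mathcal{L}_{\rho_2}$ contains the family indexed by $\mathcal{L}_{\rho_1}$, and intersecting over a larger family yields a smaller set, so $\mathcal{S}_{\rho_2}\subseteq\mathcal{S}_{\rho_1}$. (The citation to Theorem 2.3 of Li (2018) only guarantees each $\mathcal{S}_{\psi(Y)\mid\X}$ is a well-defined subspace of $\spc$; no transformation between the two surrogates is ever needed.) If you adopt that reading of the statement, your appeal to the transformational property becomes unnecessary and the obstacle you flagged disappears; as written, however, your argument proves neither the paper's version nor your own version of the claim.
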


\noindent Theorem \ref{containment} emphasizes the importance of choosing the appropriate metric as it relates to information about the central space such as dimensionality, unbiasedness, Fisher consistency, and exhaustiveness. See \cite{li2018sufficient} for more details on these properties.

\begin{theorem}\label{lipschitz_cms}
Let $\rho_1$ and $\rho_2$ be metrics defined on $\bOmega_Y$. If $\rho_1$ concentrates around the mean while $\rho_2$ does not, then $\rho_1$ will yield a better estimate of $\cms$ than $\rho_2$.
\end{theorem}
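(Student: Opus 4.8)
The plan is to turn the two informal phrases, \emph{concentrates around the mean} and \emph{better estimate}, into a signal-to-noise comparison for a generic moment-based $\cms$ estimator and then to read off the conclusion from a perturbation bound. Recall that any such estimator proceeds through a surrogate response $\widetilde Y_\rho = f\!\left(\rho(Y,\cdot)\right)$ and a population candidate matrix whose column space lies in $\cms$ under the linearity condition; for the mean space the canonical choice is the ordinary-least-squares matrix $\M_\rho = \cov\!\big(\X,\,\E(\widetilde Y_\rho \mid \X)\big)$ (or its higher-moment analogues). Writing $m_\rho(\bm x)=\E(\widetilde Y_\rho\mid\X=\bm x)$ and $v_\rho(\bm x)=\var(\widetilde Y_\rho\mid\X=\bm x)$, the first step is to fix the population target: I would assume both $\rho_1$ and $\rho_2$ are Fisher consistent for the same $\cms$, so that $m_{\rho_1}$ and $m_{\rho_2}$ induce candidate matrices with the same column space. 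This isolates the comparison as one of estimation quality rather than of population bias.

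Second, I would make \emph{concentration around the mean} precise. Since the Fr\'echet mean $\omega_\oplus$ minimizes $\E\rho^2(Y,\omega)$, ``concentration around the center of the measure'' is naturally the smallness of the dispersion of $\rho(Y,\omega_\oplus)$ about its mean; formally I would posit that $\widetilde Y_\rho - m_\rho(\X)$ is sub-Gaussian about zero with variance proxy $\sigma_\rho^2$ (equivalently $v_\rho(\bm x)\le\sigma_\rho^2$ uniformly), and take $\rho_1$ to concentrate ($\sigma_{\rho_1}^2$ small) while $\rho_2$ does not ($\sigma_{\rho_2}^2$ large, possibly diverging). The intuition to be formalized is that when $\sigma_\rho^2$ is small the surrogate is essentially equal to its conditional mean $m_\rho(\bEta^\top\X)$, so $\widetilde Y_\rho$ exposes the $\cms$ directions almost noiselessly.

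Third comes the perturbation bound. Forming the empirical candidate $\widehat\M_\rho$ from an i.i.d.\ sample, I would bound $\lVert \widehat\M_\rho-\M_\rho\rVert$ by a matrix-Bernstein argument: the plug-in sample covariance inherits the fluctuation of $\widetilde Y_\rho$, giving an error of order $\sigma_\rho\sqrt{\log k/n}$ up to moment factors. Passing from matrix error to subspace error through the Davis--Kahan $\sin\Theta$ theorem yields $\lVert \bP_{\widehat\M_\rho}-\bP_{\M}\rVert \lesssim \sigma_\rho/\lambda_d(\M)$, where $\lambda_d$ is the smallest nonzero eigenvalue carrying the signal. Because the signal $\lambda_d$ is common to both metrics (by the shared population target) while $\sigma_{\rho_1}<\sigma_{\rho_2}$, the $\rho_1$-based estimator has the strictly smaller subspace risk; and if $\sigma_{\rho_2}^2$ diverges the bound degenerates, recovering the claim in the preceding paragraph that a non-concentrating embedding fails to identify the subspace at all.

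The main obstacle is not the probability bound but the two modelling commitments it rests on. The statement as written leaves both key terms undefined, so the entire content is in choosing (i) the dispersion functional of $\rho(Y,\omega_\oplus)$ that deserves the name \emph{concentration} and (ii) the subspace-distance risk that deserves the name \emph{better}. The more delicate of the two is guaranteeing the shared population target: Theorem~\ref{containment} already warns that changing $\rho$ can change the span of the surrogate's mean, so without the assumption that $\rho_1$ and $\rho_2$ are Fisher consistent for the same $\cms$ the comparison would entangle bias with variance and the clean ordering could fail. I would therefore state that assumption explicitly as part of the hypothesis.
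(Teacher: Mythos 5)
Your argument is internally coherent but takes a genuinely different route from the paper's, and it rests on an assumption that the paper's proof deliberately does without. The paper formalizes ``concentrates around the mean'' structurally rather than probabilistically: it posits $\widetilde Y_{\rho_1} = h(\bEta^\top\X) + \epsilon$ with $\E(\epsilon)=0$ for the concentrating metric, so that $\mathcal{S}_{\E(\widetilde Y_{\rho_1}\mid\X)}$ spans the whole target, versus $\widetilde Y_{\rho_2} = h_1(\bEta_1^\top\X) + h_2(\bEta_2^\top\X)\epsilon$ for the non-concentrating one, whose conditional mean depends only on $\bEta_1$; hence $\mathcal{S}_{\E(\widetilde Y_{\rho_2}\mid\X)} = \mathrm{span}(\bEta_1) \subsetneq \mathrm{span}(\bEta_1,\bEta_2)$, and the $\rho_2$-based mean-space estimator loses the $\bEta_2$ directions at the population level. ``Better'' in the paper therefore means exhaustive versus non-exhaustive --- a bias statement that persists even with infinite data. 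You instead assume both metrics are Fisher consistent for the same $\cms$ and locate the entire difference in finite-sample variance, via a sub-Gaussian variance proxy, matrix Bernstein, and Davis--Kahan. That is a clean and more quantitative formalization, and your closing remark correctly flags the shared-target assumption as the delicate point; but note that this assumption is exactly what the paper's construction negates: in the paper's model the non-concentrating surrogate is not Fisher consistent for the full space, so your hypothesis excludes the paper's mechanism rather than refining it. The two arguments thus establish complementary readings of an informally stated claim --- yours bounds efficiency among unbiased surrogates, while the paper exhibits a population-level identification failure for the non-concentrating metric --- and a complete treatment would arguably need both.
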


\noindent For Theorem \ref{lipschitz_cms} to be satisfied, it suffices for the response embedding to be Lipschitz, which may not hold for all metrics defined on the same space. For instance, the Wasserstein distance between two normal distributions is Lipschitz but the Hellinger distance is not. Some metrics may also satisfy the Lipschitz condition locally. Moreover, for metrics without upper bounds, the embedding on scaled inputs is more likely to concentrate around the mean than the embedding based on the actual values. For instance, we expect the Frobenius norm of the difference between logged positive definite matrices to concentrate around the mean faster than the Frobenius norm between the actual matrices. Thus, it is important to evaluate metrics on case-by-case basis. The central mean space estimators are especially powerful when the concentration of the metric is Gaussian or approximately Gaussian.

\section{Numerical Studies}
In this Section, we demonstrate the effect of metric choice on Fr\'echet SDR estimation using synthetic data. The estimators considered are the Fr\'echet ordinary least squares (FOLS) and the Fr\'echet sliced inverse regression (FSIR) of \cite{zhang2023dimension}. We also include the surrogate-assisted ordinary least squares (sa-OLS) and the surrogate-assisted sliced inverse regression (sa-SIR) of \cite{soale2023data}. FOLS and sa-OLS are specifically designed to estimate the $\cms$ while FSIR and sa-SIR are for estimating the $\spc$. The performance of each method is measured in terms of the estimation error $\Delta$ defined as
\begin{align}
	\Delta = \lVert \bP_{\bEta} - \bP_{\hat\bEta} \rVert_F,
\end{align}  
where $\bP_{\bEta} = \bEta(\bEta^\top \bEta)^{-1}\bEta^\top$. Thus, smaller values of $\Delta$ indicate higher accuracy. 

We generate the random samples $\{(\X_i, Y_i)\}_{i=1}^n$ using the same data generation process for the predictor, but vary the response models based on the metric space. We fix $(n,p) = (100, 10)$ and $(500, 20)$ and generate the predictor $\X \overset{i.i.d.}{\sim} t_{20}(\bm 0, \I_p)$. 
Next, we let $\beta_1^\top = (1,1,0,\ldots, 0)/\sqrt{2}\in \R^p$ and $\beta_2^\top = (0,0,1,1,0\ldots, 0)/\sqrt{2}\in \R^p$, and generate the responses as follows. For easy replication, the number of slices is fixed at 5 for both FSIR and sa-SIR. We also fixed the number of projection vectors $N=1000$ for sa-OLS and sa-SIR.

\subsection{Euclidean responses}
The response is generated in $\R^2$ as follows:
\begin{enumerate}
\item [I.] $\Y = (Y_1, Y_2)^\top$ with 
$Y_1 = \sin(\beta_1^\top\X) + 0.5\epsilon_1$ and $Y_2 = \cos(\beta_1^\top\X) + 0.5\epsilon_2$,
\item [II.] $\Y = (Y_1, Y_2)^\top$ with 
$Y_1 = X_1 + X_2^3 + \epsilon_1$ and $Y_2 = 0.5e^{X_1 +0.2} + \epsilon_2$, 
\end{enumerate}
where $\bepsilon = (\epsilon_1, \epsilon_2)^\top \overset{i.i.d.}{\sim} N_2(\bm 0, \I_2)$.

In model I, the true basis $\eta = \beta_1$. From Table \ref{tab:Euc_sims}, we see that the $\cms$ estimates based on the Mahalanobis distance outperforms the estimates based on the $\ell_1$ and $\ell_2$ norms. The same pattern can be seen in the $\spc$ estimates but the difference is less pronounced. This is not surprising because although the responses are uncorrelated at the population level, it is possible that the samples are somewhat correlated. In fact, $\sin(\bm x)$ and $\cos(\bm x)$ are uncorrelated if say, $\bm x \overset{i.i.d}{\sim} Uniform(0, 2\pi)$. However, for $\bm x \overset{i.i.d}{\sim} Uniform (0, \pi/2)$, the correlation between them is -0.5.  

The true basis in model II is $\bEta = (e_1, e_2)$, where $e_1^\top=(1,0,\ldots,0) \in\R^p$ and $e_2^\top=(0,1,\ldots,0) \in\R^p$. Here, $Y_1$ and $Y_2$ are correlated for positive values of $X_1$. Thus, we expect the Mahalanobis distance to outperform the other metrics. $Y_2$ is prone to  outliers, which is a plausible reason why the estimates based on the $\ell_1$ norm appear to be slightly better than those based on the $\ell_2$ norm. FOLS and FSIR were not implemented for the Euclidean response in the original paper of \cite{zhang2023dimension}, and are thus omitted.

\begin{center}
\small
\setlength{\tabcolsep}{4pt}
\LTcapwidth=\textwidth
\begin{longtable}{ccccccc}
\caption{Mean (SD) of $\Delta$ based on 500 random samples with Euclidean response}
\label{tab:Euc_sims} \\
\hline
Model & $(n,p)$ & Metric  &  FOLS & sa-OLS & FSIR & sa-SIR \\ 
\hline
\multirow{6}{*}{I} & \multirow{3}{*}{(100, 10)} & $\ell_1$ & \multirow{3}{*}{NA} & 0.5481 (0.0077) & \multirow{3}{*}{NA} & 0.4961 (0.0096) \\
 & & $\ell_2$ &  & 0.4904 (0.0072) &   & 0.4675 (0.0089) \\
 & & $M_d$ &  & 0.4071 (0.0061) &  & 0.4429 (0.0079) \\ \cline{2-7} 
 & \multirow{3}{*}{(500, 20)} & $\ell_1$ & \multirow{3}{*}{NA} & 0.3621 (0.0031) & \multirow{3}{*}{NA} & 0.2721 (0.0026) \\
& & $\ell_2$ & & 0.3200 (0.0027) & & 0.2608 (0.0025) \\
&  & $M_d$ &  & 0.2712 (0.0025) & & 0.2526 (0.0024) \\ \cline{1-7}

\multirow{3}{*}{II} & \multirow{3}{*}{(100, 10)} & $\ell_1$ & \multirow{3}{*}{NA} & 1.1544 (0.0104) & \multirow{3}{*}{NA} & 1.0696 (0.0119) \\
& & $\ell_2$ & & 1.2059 (0.0095)  &  & 1.0859 (0.0117) \\
& & $M_d$ &  & 0.9416 (0.0099) &  & 0.9874 (0.0119) \\ \cline{2-7} 
\quad \\

\multirow{3}{*}{II} & \multirow{3}{*}{(500, 20)} & $\ell_1$ & \multirow{3}{*}{NA} & 1.1657 (0.0087) & \multirow{3}{*}{NA} &  0.5635 (0.0047)\\
& & $\ell_2$ &  &  1.2343 (0.0074) &  &  0.5800 (0.0049)\\
& & $M_d$ & & 0.8112 (0.0090) &  & 0.5220 (0.0038) \\ \hline
\end{longtable}
\end{center}

\subsection{Space of univariate distributions}
Here, the responses are generated as samples from a distribution. For $i=1,\ldots,n$, we generate the samples as follows:
\begin{enumerate}
     \item[III.] $Y_i \in \R^{50} \overset{i.i.d}{\sim} Poisson\big( \exp(\beta_1^\top\X_i)\big)$,
     \item[IV.] $Y_i \in \R^{50} \overset{i.i.d}{\sim} \log-normal\big(\beta_1^\top\X_i, 1\big)$,
     \item[V.] $Y_i \in \R^{50} \overset{i.i.d.}{\sim} Gamma\big(\alpha(\X_i), 3 \big)$,  where $\alpha(\X_i) =  e^{\beta_1^\top\X_i} $.
\end{enumerate}

Here, all three models III, IV, and V are single-index with $\eta = \beta_1$. From Table \ref{tab:Dist_sims}, the Hellinger distance shows significant improvement in the $\cms$ estimates for both models III and V over the Wasserstein metrics. For model III, the Wasserstein distances are not globally Lipschitz but may be locally Lipschitz when $\beta_1^\top\X < 0$. Thus, we do not expect $W_1$ and $W_2$ to concentrate around the mean by Theorem \ref{lipschitz_cms}. On the other hand, the Hellinger distance concentrates around the center. Moreover, the Hellinger distance is better suited for capturing the tail variations compared to the Wasserstein metrics. This is one of the reasons for the difference we see for the $\cms$ estimates for model III. However, we see less difference in the $\spc$ estimates as the central space is not restricted to the mean.

In model IV, the Hellinger distance performs poorly compared to Wasserstein metrics in both the $\cms$ and $\spc$ estimates. The 1-Wassertein performs slightly better than the 2-Wasserstein because any pair of log-normal distributions with same scale parameters exhibit stochastic dominance. Thus, based on Proposition \ref{stochastic_dominance}, $W_1(.)$ captures just the difference in the means while $W_2(.)$ may introduce some noise from the difference in sample scale parameters and shape. The log-normal distributions are more symmetric, which makes the Wasserstein a better metric for capturing location differences than the Hellinger distance.

Lastly, for the same scale, the Gamma distribution is more right-skewed for smaller values of the shape parameter. Thus, the variations between the distributions are more likely to be in the right tails, which is better captured by the Hellinger distance. This affects the concentrations of the metrics at the mean as seen from the $\cms$ estimates.

\begin{center}
\small
\setlength{\tabcolsep}{4pt}
\LTcapwidth=\textwidth
\begin{longtable}{ccccccc}
\caption{Mean (SD) of $\Delta$ based on 500 random samples for distributions as response}
\label{tab:Dist_sims}\\
\hline
Model & $(n,p)$ & Metric  &  FOLS & sa-OLS & FSIR & sa-SIR \\ 
\hline
\multirow{6}{*}{III}  & \multirow{3}{*}{(100, 10)} & $W_1$ & 0.6583 (0.0099) & 0.6560 (0.0099) & 0.2614 (0.0053) & 0.2280 (0.0031) \\
& & $W_2$ & 0.6425 (0.0100) & 0.6397 (0.0100) & 0.2629 (0.0053) & 0.2271 (0.0032) \\
& & $H$  & 0.3272 (0.0045)  & 0.3129 (0.0044) & 0.2843 (0.0044) & 0.2817 (0.0044)  \\ \cline{2-7} 
& \multirow{3}{*}{(500, 20)} & $W_1$ & 0.6807 (0.0095) & 0.6868 (0.0095) & 0.1677 (0.0057) & 0.1359 (0.0011) \\
& & $W_2$ & 0.6669 (0.0097) & 0.6728 (0.0096) & 0.1671 (0.0056) & 0.1358 (0.0011) \\
& & $H$ & 0.2044 (0.0018) & 0.1939 (0.0015) & 0.1545 (0.0013) & 0.1549 (0.0014) \\ \cline{1-7}

\multirow{6}{*}{IV}  & \multirow{3}{*}{(100, 10)} & $W_1$ & 0.6697 (0.0099) & 0.6653 (0.0099) & 0.2769 (0.0069) & 0.2293 (0.0032) \\
& & $W_2$ & 0.7063 (0.0098) & 0.7016 (0.0099) & 0.3245 (0.0089) & 0.2444 (0.0034) \\
& & $H$ & 1.3385 (0.0046) & 1.3405 (0.0044) & 1.3411 (0.0044) & 1.3383 (0.0045) \\ \cline{2-7} 
& \multirow{3}{*}{(500, 20)} & $W_1$ & 0.6992 (0.0095) & 0.7015 (0.0095) & 0.1908 (0.0074)      & 0.1384 (0.0011) \\
& & $W_2$ & 0.7356 (0.0094) & 0.7365 (0.0094) & 0.2435 (0.0105) & 0.1458 (0.0012) \\
& & $H$ & 1.3752 (0.0023) & 1.3788 (0.0021) & 1.3807 (0.0021) & 1.3811 (0.0021)  \\ \cline{1-7} 

\multirow{6}{*}{V} & \multirow{3}{*}{(100, 10)} & $W_1$ & 0.6250 (0.0103) & 0.6547 (0.0100) & 0.2502 (0.0042) & 0.2257 (0.0030) \\
& & $W_2$ & 0.6124 (0.0103) & 0.6394 (0.0101) & 0.2611 (0.0042) & 0.2266 (0.003) \\
& & $H$ & 0.377 (0.0053) & 0.3512 (0.0045) & 0.4183 (0.0078) & 0.3762 (0.0071) \\ \cline{2-7} 
& \multirow{3}{*}{(500, 20)} & $W_1$ & 0.6338 (0.0101) & 0.6859 (0.0095) & 0.1543 (0.0037) & 0.1351 (0.0011) \\
& & $W_2$ & 0.6216 (0.0102) & 0.6721 (0.0097) & 0.1608 (0.0037) & 0.1356 (0.0011) \\
& & $H$ & 0.2143 (0.0018) & 0.2113 (0.0018) & 0.1995 (0.0017) & 0.1831 (0.0015) \\  \hline
\end{longtable}
\end{center}

\subsection{Space of positive definite matrices}
For $i=1,\ldots,n$, generate the positive definite matrix responses as follows:
\begin{enumerate}
	\item[VI.] $Y_i = \mathrm{expm}\begin{pmatrix}
	    1 & \rho(\X_i) \\ \rho(\X_i) & 1
	\end{pmatrix} + \epsilon\I_2$ with $\rho(\X_i) = \cfrac{e^{\beta_1^\top\X_i} - 1 }{e^{\beta_1^\top\X_i} + 1}$,
       \item[VII.] $Y_i = 2\bLambda\bLambda^\top + \epsilon\I_2$ with $\bLambda = (e^{\beta_1^\top\X_i}, e^{\beta_2^\top\X_i})$,
       \item[VIII.] $Y_i = \begin{pmatrix}
	    1 & \rho_1(\X_i) & \rho_2(\X_i) \\ \rho_1(\X_i) & 1 & \rho_1(\X_i) \\
     \rho_2(\X_i) & \rho_1(\X_i) & 1 
	\end{pmatrix} + \epsilon\I_3$, \\
 where $\rho_1(\X_i) = 0.5\left(\cfrac{e^{\beta_1^\top\X_i}}{e^{\beta_1^\top\X_i} + 1}\right)$, $\rho_2(\X_i) = 0.8\sin(\beta_2^\top\X_i)$, and $\epsilon = 0.10$.
\end{enumerate}

In model VI, $\eta = \beta_1$ while in models VII and VIII, $\bEta = (\beta_1, \beta_2)^\top$. We observe in model VI that the Frobenius norm based on the actual matrices performs slightly better than the ones based on the scaled matrices in the $\cms$ estimate. However, for the entire central space, the Cholesky and log Frobenius norms yield better estimates. Unsurprisingly, Frobenius norm based on the scaled matrices, i.e., log and Cholesky dominates in model VII for estimating both the $\cms$ and $\spc$--with the difference more striking for the $\cms$ estimation. The domination of log Frobenius is expected as the Frobenius norm of the exponentiated matrix values are less likely to concentrate around the mean. A similar pattern is observed in model VIII. The less dispersion in the scaled Frobenius norms is a plausible reason why they yield better estimates for the $\spc$ estimates.

\begin{center}
\small
\setlength{\tabcolsep}{4pt}
\LTcapwidth=1.2\textwidth
\begin{longtable}{ccccccc}
\caption{Mean (SD) of $\Delta$ based on 500 random samples for SPD matrices as response}
\label{tab:SPDM_sims}\\
\hline
Model & $(n,p)$ & Metric  &  FOLS & sa-OLS & FSIR & sa-SIR \\ 
\hline
\multirow{6}{*}{VI} & \multirow{3}{*}{(100, 10)} & F & 0.1613 (0.0240) & 0.1405 (0.0152) & 0.2346 (0.0318) & 0.2152 (0.0255) \\
& & Chol F & 0.1809 (0.024) & 0.1434 (0.0152) & 0.22 (0.0286) & 0.2081 (0.0239)  \\
& & Log F & 0.1842 (0.0246) & 0.1407 (0.0145) & 0.2193 (0.0285) & 0.2064 (0.0236) \\ \cline{2-7} 
& \multirow{3}{*}{(500, 20)} & F & 0.0941 (0.0054) & 0.0837 (0.0042) & 0.1195 (0.0069) & 0.112 (0.0065) \\
& & Chol F & 0.1118 (0.0066) & 0.0878 (0.0049) & 0.1128 (0.0062) & 0.1076 (0.0061) \\
& & Log F & 0.1142 (0.0067)  & 0.0857 (0.0048) & 0.1123 (0.0061)  & 0.1069 (0.0061) \\ \cline{1-7} 

\multirow{6}{*}{VII}  & \multirow{3}{*}{(100, 10)} & F & 0.9441 (0.1459) & 0.9488 (0.1432) & 0.7083 (0.1791) & 0.4721 (0.0413) \\
& & Chol F & 0.6362 (0.0973) & 0.6463 (0.1017) & 0.4714 (0.0463) & 0.4365 (0.0392) \\
& & Log F & 0.4753 (0.0339) & 0.4748 (0.0341)  & 0.5598 (0.048)  & 0.4295 (0.0414) \\ \cline{2-7} 
 & \multirow{3}{*}{(500, 20)} & F & 0.8198 (0.068) & 0.8332 (0.0742) & 0.2653 (0.0157) & 0.2418 (0.0043) \\
& & Chol F & 0.4005 (0.0239) & 0.4231 (0.0265) & 0.2345 (0.0068) & 0.2185 (0.0045) \\
& & Log F & 0.2848 (0.0214) & 0.2932 (0.0224) & 0.2798 (0.0146)  & 0.2119 (0.0084) \\ \cline{2-7} 

\multirow{6}{*}{VIII} & \multirow{3}{*}{(100, 10)} & F & 0.7428 (0.1689) & 0.7203 (0.1661) & 0.9672 (0.1645) & 0.5411 (0.0974) \\
& & Chol F & 0.6518 (0.2004) & 0.6837 (0.1932) & 0.8349 (0.1527) & 0.455 (0.0281) \\
& & Log F & 0.6345 (0.1779)  & 0.6226 (0.2052) & 0.9908 (0.1348) & 0.4389 (0.0206) \\ \cline{2-7} 
& \multirow{3}{*}{(500, 20)} & F & 0.5027 (0.0772) & 0.5259 (0.0752) & 0.5051 (0.0839) & 0.2509 (0.0257) \\
& & Chol F & 0.3046 (0.0623) & 0.3294 (0.0702) & 0.4037 (0.0705) & 0.2195 (0.0201) \\
& & log F & 0.2388 (0.046) & 0.2439 (0.0484) & 0.3391 (0.0595) & 0.2047 (0.0175) \\ \hline
\end{longtable}
\end{center}

\subsection{Space of networks}
Networks of size 50 are generated based on three network models: Erd\"os-R\'enyi, the preferential attachment, and the stochastic block model (SBM).  For $i=1,\ldots,n$, we generate the networks as follows:

\begin{enumerate}
  \item[IX.]  $Y_i\sim$ an Erd\"os-R\'enyi model with edge probability $p=\cfrac{e^{\sin(\beta_1^\top\X_i)}}{1+e^{\sin(\beta_1^\top\X_i)}}$. 

  \item[X.] $Y_i\sim$ a preferential attachment model with power $\gamma=1 + 0.8\left(\cfrac{e^{\beta_1^\top\X_i}}{1+e^{\beta_1^\top\X_i}}\right)$. 

   \item[XI.] $Y_i\sim$ a weighted stochastic block model with three clusters prior probabilities 0.30, 0.45, 0.25. The edge weights  $\overset{i.i.d.}{\sim} Poisson \big(\lambda_{j\ell} + \lceil e^{\beta_1^\top\X_i} \rceil \big)$,  where \\ $\lambda_{j\ell} = \begin{cases} 6 \text{ if } j=\ell \\ 1 \text{ if } j\neq \ell,  \end{cases}$ for $j,\ell = 1,2,3$.\\
    See \cite{leger2016blockmodels} for more on this model.
\end{enumerate}

Models IX, X, and XI are all single index models with $\eta=\beta_1$. Besides the networks based on the Erd\"os-R\'enyi model, the degree centrality distances perform worse in the preferential attachment networks and the networks based on stochastic block models. This is not surprising as the degree distributions for the preferential attachment networks follow the power law, which is highly skewed. The closeness centrality distance shows improvement over the other network distances in the stochastic block models, where communities are highly connected. The diffusion distance appears to be better suited for preferential attachment networks. As the preferential attachment and stochastic block models are more likely to mimic real networks, we suggest using the closeness centrality and diffusion distance in real applications. Network responses were not considered in the original paper on \cite{zhang2023dimension}, and are thus omitted.

\begin{center}
\small
\setlength{\tabcolsep}{4pt}
\LTcapwidth=1.2\textwidth
\begin{longtable}{ccccccc}
\caption{Mean (SD) of $\Delta$ based on 500 random samples for networks as response}
\label{tab:net_sims}\\
\hline
Model & $(n,p)$ & Metric  &  FOLS & sa-OLS & FSIR & sa-SIR \\ 
\hline
\multirow{3}{*}{IX} & \multirow{3}{*}{(100, 10)} & $C_d$ & \multirow{3}{*}{NA}  & 0.2590 (0.0049) & \multirow{3}{*}{NA}  & 0.2717 (0.0052) \\
& & $C_c$ & & 0.2603 (0.0049) & & 0.2731 (0.0052) \\
& &  $D_d$ & & 0.3043 (0.0051) & & 0.2707 (0.0052) \\ \cline{2-7} 
\quad \\
\multirow{3}{*}{IX} & \multirow{3}{*}{(500, 20)} & $C_d$ & \multirow{3}{*}{NA}  & 0.1709 (0.0017) & \multirow{3}{*}{NA}  & 0.1725 (0.0018) \\
&  & $C_c$ & & 0.1725 (0.0017) & & 0.1729 (0.0018) \\
&  & $D_d$ & & 0.2002 (0.0019) & & 0.1720 (0.0018)  \\ \cline{1-7}

\multirow{6}{*}{X} & \multirow{3}{*}{(100, 10)} & $C_d$ & \multirow{3}{*}{NA}  & 0.8819 (0.0117)  & \multirow{3}{*}{NA}  & 1.0673 (0.0127) \\
&  & $C_c$ & & 0.6321 (0.0067) &  & 0.6976 (0.0094) \\
&  & $D_d$ & & 0.4907 (0.0053) &  & 0.9635 (0.0152) \\ \cline{2-7} 
& \multirow{3}{*}{(500, 20)} & $C_d$ & \multirow{3}{*}{NA}  & 0.4922 (0.004) & \multirow{3}{*}{NA}  & 0.5105 (0.0058) \\
& & $C_c$ & & 0.4327 (0.0033) & & 0.425 (0.0033)  \\
& & $D_d$ & & 0.3236 (0.0024) & & 0.3301 (0.0025)  \\ \cline{1-7} 
 
\multirow{6}{*}{XI} & \multirow{3}{*}{(100, 10)} & $C_d$ & \multirow{3}{*}{NA}  & 0.4965 (0.0076) & \multirow{3}{*}{NA}  & 0.3033 (0.0038)  \\
& & $C_c$ &  & 0.2704 (0.0032) & & 0.2782 (0.0034)  \\
& & $D_d$ & & 0.3981 (0.0046) &  & 0.2849 (0.0035)  \\ \cline{2-7} 
& \multirow{3}{*}{(500, 20)} & $C_d$ & \multirow{3}{*}{NA}  & 0.4096 (0.0063) & \multirow{3}{*}{NA}  & 0.1875 (0.0014) \\
& & $C_c$ & & 0.1737 (0.0013) & & 0.1742 (0.0013)  \\
& & $D_d$ & & 0.2637 (0.0019) & & 0.1807 (0.0013)  \\ \hline
\end{longtable}
\end{center}

\section{Real Applications}
In this section, we present two applications with responses belonging to two different metric spaces. In the first application we focus on inference and in the second we focus on prediction. 

\subsection{Functional brain connectivity network data}
Recent advances in functional magnetic resonance imaging (fMRI), has allowed us to capture in-depth structural and functional activities of the human brain. By leveraging such imaging data, we can better understand certain neurological diseases such as Alzheimer's and Parkinson's disease (PD). In this study, we analyze the correlations between different regions of the brain generated using the HarvardOxford Parcellation method. The analysis data consists of brain connectivity networks in the Neurocon dataset released by \cite{badea2017exploring}. There are 41 subjects in the dataset, 26 of whom have Parkinson's disease. Details about the preprocessing of the images as well as the codes are publicly available at
\href{https://github.com/brainnetuoa/data_driven_network_neuroscience}{https://github.com/brainnetuoa/data\_driven\_network\_neuroscience} and  \href{https://doi.org/10.17608/k6.auckland.21397377}{https://doi.org/10.17608/k6.auckland.21397377}. 

Our response of interest is the correlation between the 48 regions of interest (ROI) of the subject's brain. Figure \ref{fig:brain_corr} shows the comparison between the correlation plot of a subject with and without Parkinson's disease.

\begin{figure}[htb!]
    \centering
    \includegraphics[width=\linewidth]{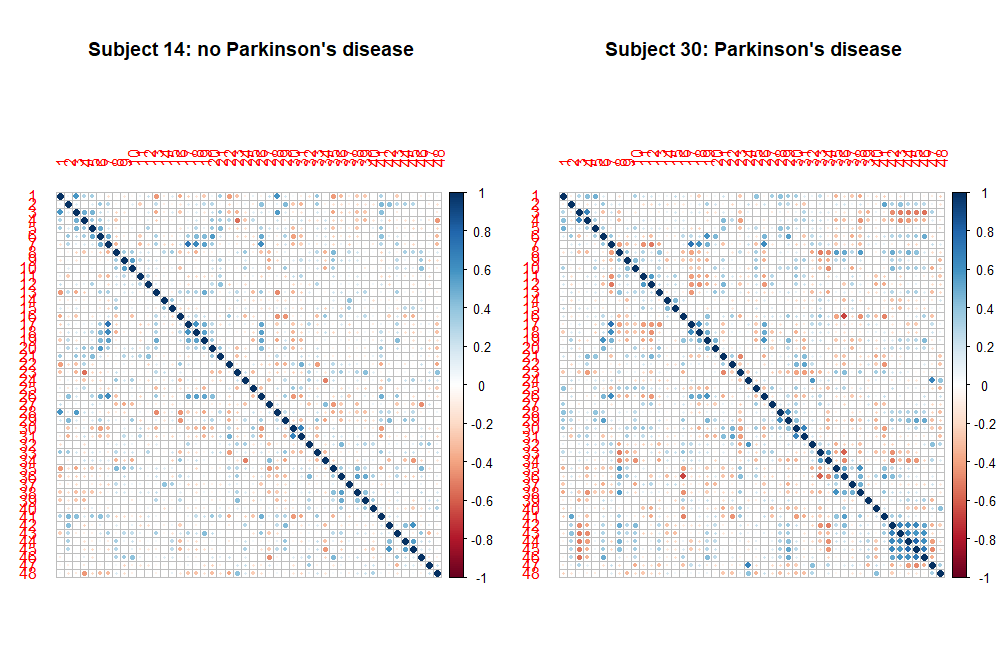}
    \caption{Correlation plot between the 48 different regions of the brain for a subject with and without Parkinson's disease}
    \label{fig:brain_corr}
\end{figure}

We take as predictors, the recorded age, sex (F/M), and diagnosis (PD/control). The age range is between 45 and 86, thus, to avoid undue influence from outliers, we use the natural logarithm of the age variable in the analysis. Next, we proceed to find the central mean space $\cms$ estimates using the the Fr\'echet OLS (FOLS) and surrogate-assisted OLS (sa-OLS) under different metrics, i.e., the Frobenius (F), Cholesky Frobenius (Chol F), and log Frobenius (log F) metrics. In particular, we want to compare the average difference in correlation structure between the subjects diagnosed with Parkinson's disease vs those without the disease after controlling for age and sex. The results are provided in Table \ref{fMRI_brain}.

\begin{table}[htb!]
\caption{$\cms$ estimates based FOLS and sa-OLS under different metrics. The bootstrap standard errors and confidence intervals within two standard deviations are given in parentheses.}
\label{fMRI_brain}
\resizebox{\textwidth}{!}{
\begin{tabular}{lcccccc}
\hline
Predictor   & FOLS (F)   & sa-OLS (F)  & FOLS (Chol F)  & sa-OLS (Chol F)  & FOLS (log F)  & sa-OLS (log F)     \\ \hline
Diagnosis [PD] & 0.6360 (0.3842)    & 0.6403 (0.0456)    & 0.8153 (0.3654)    & 0.7979 (0.0427)    & 0.8015 (0.3708)    & 0.7907 (0.0460)    \\
& (-0.1324, 1.4044)  & (0.5491,  0.7315)  & (0.0845,  1.5461)  & (0.7125,  0.8833)  & (0.0599,  1.5431)  & (0.6987,  0.8827)  \\ \hline
$\log$ Age  & -5.7125 (0.8634)   & -5.7132 (0.0419)   & -5.7654 (0.6876)   & -5.7660 (0.0404)    & -5.7636 (0.9290)   & -5.7626 (0.0406)   \\
& (-7.4393, -3.9857) & (-5.7970, -5.6294) & (-7.1406, -4.3902) & (-5.8468, -5.6852) & (-7.6216, -3.9056) & (-5.8438, -5.6814) \\ \hline
Sex [F] & 1.0446 (0.3873)    & 1.0437 (0.0016)    & 0.9286 (0.3633)    & 0.9378 (0.0032)    & 0.9396 (0.3708)    & 0.9469 (0.0050)    \\
& (0.2700, 1.8192)     & (1.0405 , 1.0469)  & (0.2020,  1.6552)  & (0.9314,  0.9442)  & (0.1980,  1.6812)  & (0.9369,  0.9569)  \\ \hline
\end{tabular}
}
\end{table}

In Table \ref{fMRI_brain}, we see that based on the FOLS estimate with the Frobenius norm, there is no significant difference in the correlation structure between the subjects diagnosed with Parkinson's disease and those without the diseases at a fixed age and sex. However, all other methods and metrics show significant difference. There is also variation in the level of the average difference depending on the metric choice. While both age and sex appear to be significantly related to the brain connectivity, the level of association also varies depending on the method and metric choice. Overall, the surrogate-assisted estimates appear to be more stable across metrics with smaller standard errors for the estimates compared to the Fr\'echet OLS estimates. The results in Table \ref{fMRI_brain} emphasizes the need to explore different estimators and metrics in applications.

\subsection{Glucose monitoring data}
Type 2 diabetes mellitus (T2DM) disease has become rampant across the world in recent decades. Classifying a patient as diabetic is typically based on some threshold of the level of glucose in the blood (glycaemia) or hemoglobin A1C, which could potentially lead to incorrect diagnosis. However, with recent advances in continuous glucose monitoring systems, we could improve the metrics for diagnosing type 2 diabetes mellitus. The goal of this study is to analyze the distribution of glycaemia and determine the factors influencing the distribution. This way, we take into account all aspects of the distribution including the center, the tails, and everything in-between.

Our data comes from the continuous glucose monitoring of 208 patients selected from the University Hospital of M\'ostoles, in Madrid, during an outpatient visit from January 2012 to May 2015. The data consists of the distribution of glycaemia recorded every 5 seconds for each patient over at least a 24 hour period. Patients were followed until either the diagnosis of T2DM or end of the study. Patients with basal glycaemia $\geq 126 mg/dL$ and/or haemoglobin A1c $\geq 6.5\%$, were diagnosed with T2DM at the end of the study. This data is publicly available and the details about the collection procedure can be found in \cite{colas2019detrended}. 

For our analysis, we take the glycaemia distributions as the response and clinical variables including age (years), body mass index (BMI; $kg/m^2$), basal glycaemia ($mg/dL$), and haemoglobin A1c (HbA1c; \%) as the predictors. Next, we proceed to estimate the $\cms$ using the Fr\'echet OLS (FOLS) and surrogate-assisted OLS (sa-OLS); and the $\spc$ using the Fr\'echet SIR (FSIR) and surrogate-assisted SIR (sa-SIR)  under different metrics, i.e., the 1-Wasserstein ($W_1$), 2-Wasserstein ($W_2$), and Hellinger ($H$) distances. 

To compare the performance of the methods and metrics, we estimate the leave-one-out average prediction error for the conditional Fr\'echet mean based on the estimated sufficient predictor using model (\ref{frechet_reg}). Let $\hat\beta$ denote the basis estimate. The average leave-one-out prediction error is given by
\begin{align}
    \delta = \cfrac{1}{n} \displaystyle\sum_{i=1}^n W_2^2\big(Y_i - \hat{m}_{\oplus}(\hat\beta^\top\bm x^{(-i)}) \big),
    \label{err_glucose}
\end{align}
where $\hat{m}_{\oplus}(\bm x)$ is the sample estimate of $m_{\oplus}(\bm x)$ and $\bm x^{(-i)}$ is $\bm x$ without the $i$th observation. The results for $\cms$ and $\spc$ estimates are provided in Tables \ref{glcy_cms} and \ref{glcy_spc}, respectively.
 
\begin{table}[htb!]
\centering
\caption{$\cms$ estimate based on different metrics}
\label{glcy_cms}
\resizebox{\textwidth}{!}{
\begin{tabular}{lcccccc}
\hline
& FOLS ($W_1$) & sa-OLS ($W_1$) & FOLS ($W_2$) & sa-OLS ($W_2$) & FOLS ($H$)  & sa-OLS ($H$) \\ \hline
Age  & -0.5746 & -0.5554 & -0.6745 & 0.6590 & -0.3691 & 0.4407 \\
BMI  & 0.0271 & 0.0418 & -0.1207 & 0.1055 & -0.1427 & 0.4233  \\
Basal glycaemia & -0.4789 & -0.4919 & -0.4202 & 0.4371 & 0.9241 & -0.9097 \\
HbA1c & -0.4283 & -0.4346 & -0.3658 & 0.3703 & -0.5121 & 0.2622 \\
\hline
$\delta$ & 11.9933 & 11.9840 & 11.9600 & 11.9756 & 12.0355 & 12.1097 \\ \hline
\end{tabular}
}
\end{table}

In Table \ref{glcy_cms}, we see very similar estimates based on the Wasserstein metrics as opposed to the Hellinger distance. Going by the estimates based on the Wasserstein metrics, the most important factor that influence the glycaemia distribution is Age, followed by basal glycaemia, then Hemoglobin A1C, and lastly BMI. However, based on the Hellinger distances, the most important determinant is the basal glycaemia, followed by age, then BMI, and lastly the hemoglobin A1C. In terms of prediction accuracy, the Wasserstein metrics appear to be better than the Hellinger estimates. We visualize the predicted distributions for the best and worse predictions in Figure \ref{fig:glyc_cms}.

 \begin{figure}[htb!]
 \captionsetup{width=\textwidth}
    \centering
    \includegraphics[width=0.49\linewidth, height=2.8in]{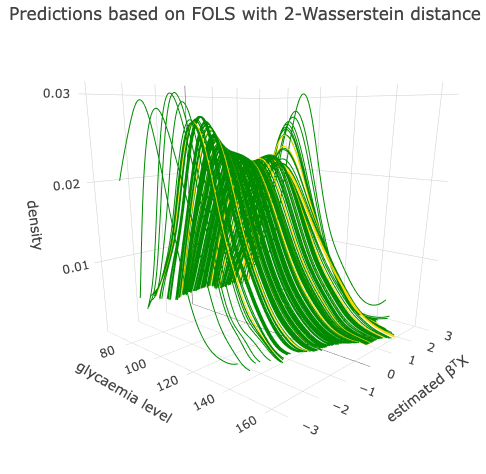}
    \includegraphics[width=0.48\linewidth, height=2.8in]{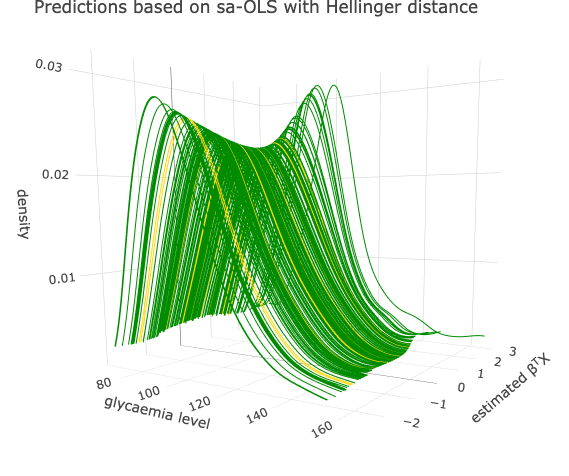}
    \caption{Predicted distributions of glycaemia based on $\hat\beta^\top\X$ for $\cms$ based on the 2-Wasserstein and Hellinger distances, respectively. The golden lines denote patients diagnosed with type 2 diabetes mellitus while the green lines denote otherwise.}
    \label{fig:glyc_cms}
\end{figure}

Similarly, in Table \ref{glcy_spc}, we provide the $\spc$ estimates. Again, we see conflicting results based on the Wasserstein metrics and the Hellinger distances. For the estimates based on the Wasserstein metrics, BMI is the most important, follow by age, then basal glycaemia, and then hemoglobin AIC. However, with the Hellinger distance, it is basal glycaemia, followed by hemoglobin A1C, then BMI, and lastly age. In terms of prediction, the 1-Wasserstein and Hellinger distances appear to perform better than the 2-Wasserstein. We visualize the surrogate-assisted SIR estimate based on the 1-Wasserstein and the Fr\'echet SIR estimate based on the Hellinger distance in Figure \ref{glcy_spc}.

\begin{table}[htb!]
\centering
\caption{$\spc$ estimate based on different metrics}
\label{glcy_spc}
\resizebox{\textwidth}{!}{
\begin{tabular}{lcccccc}
\hline
& FSIR ($W_1$) & sa-SIR ($W_1$) & FSIR ($W_2$) & sa-SIR ($W_2$) & FSIR ($H$)  & sa-SIR ($H$) \\ \hline
Age & -0.6268 & -0.6237 & -0.7085 & -0.6730 & -0.1396 & 0.0932 \\
BMI & -0.6546 & -0.6471 & -0.6767 & -0.6987 & 0.3607 & -0.4169 \\
Basal glycaemia & 0.5575 & 0.6133 & 0.4396 & 0.4989 & 0.6080 & -0.6510 \\
HbA1c & -0.3406 & -0.2906 & -0.2703 & -0.2442 & 0.5409 & -0.4496\\
 \hline
$\delta$ & 11.9795 & 11.9567 & 12.0099 & 12.0451 & 11.9848 & 11.9909 \\ \hline
\end{tabular}
}
\end{table}

Notice that variable importance varies depends on the estimator. The OLS estimators, i.e., FOLS and sa-OLS focus on the mean of the distribution, while the SIR estimators, i.e., FSIR and sa-SIR take the entire distributions into account. Thus, for the OLS estimators the tail variations in the distributions are not important, which is why the Hellinger distance performed worse at the predictions. On the other hand, SIR considers all aspects of the distributions, which is why the Hellinger distance becomes competitive and even performs better than the 2-Wasserstein. Another thing to notice is that estimators with significantly different basis estimates can yield similar prediction accuracy as seen in the 1-Wasserstein and Hellinger SIR estimates. Finally, by examining the estimated distributions, we see that the distributions of patients diagnosed with T2DM do not necessarily standout compared to those without the disease.

\begin{figure}[htb!]
\captionsetup{width=\textwidth}
    \centering
    \includegraphics[width=0.49\linewidth, height=2.8in]{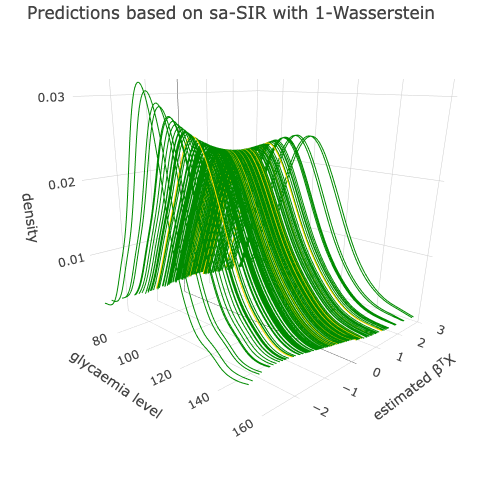}
    \includegraphics[width=0.48\linewidth, height=2.8in]{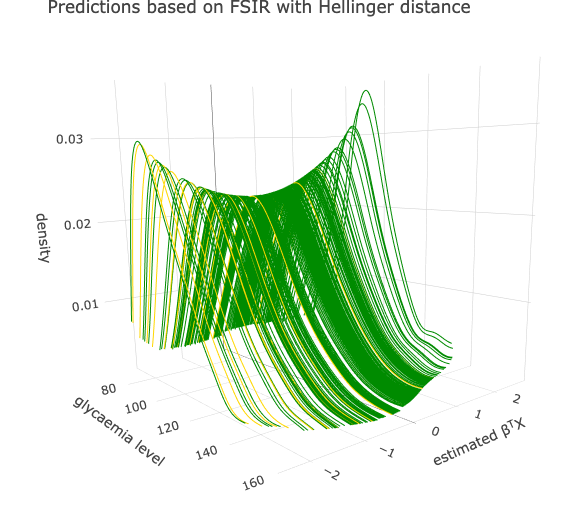}

\caption{Predicted distributions of glycaemia based on $\hat\beta^\top\X$ for $\spc$ based on the 1-Wasserstein and Hellinger distances, respectively. The golden lines denote patients diagnosed with type 2 diabetes mellitus while the green lines denote otherwise.}
\label{fig:glyc_spc}
\end{figure}

\section{Discussion}
In this paper, we investigated how metric choice plays a crucial role in Fr\'echet regression. Across metric spaces, we saw that while some metrics are very popular in the literature, there is no single metric that can serve all purposes. Whether or not a metric is appropriate for a given metric space depends on the specific application. In particular, for estimating the conditional Fr\'echet mean, which is the focus of most applications, metric choice is key.  

The conflicting conclusions from the $\cms$ and $\spc$ estimates based on different metrics buttresses the point that metric choice plays a crucial role, especially when dealing with the mean space, as we observed in the numerical studies. A future study that explores how metric choice affects the sufficient dimension reductions estimates in regression settings where both the predictor and response to belong to random metric spaces will be a reasonable extension.








\bibliographystyle{agsm}
\bibliography{ref}

@article{petersen2019frechet,
  title={Fr{\'e}chet regression for random objects with Euclidean predictors},
  author={Petersen, Alexander and M{\"u}ller, Hans-Georg},
  journal={The Annals of Statistics},
  volume={47},
  number={2},
  pages={691--719},
  year={2019},
  publisher={JSTOR}
}

@article{petersen2019frechettime,
  title={Fr{\'e}chet estimation of time-varying covariance matrices from sparse data, with application to the regional co-evolution of myelination in the developing brain},
  author={Petersen, Alexander and Deoni, Sean and M{\"u}ller, Hans-Georg},
  journal={The Annals of Applied Statistics},
  volume={13},
  number={1},
  pages={393--419},
  year={2019},
  publisher={JSTOR}
}

@article{ying2022frechet,
  title={Fr{\'e}chet sufficient dimension reduction for random objects},
  author={Ying, Chao and Yu, Zhou},
  journal={Biometrika},
  volume={109},
  number={4},
  pages={975--992},
  year={2022},
  publisher={Oxford University Press}
}

@article{zhang2023dimension,
  title={Dimension reduction for Fr{\'e}chet regression},
  author={Zhang, Qi and Xue, Lingzhou and Li, Bing},
  journal={Journal of the American Statistical Association},
  pages={1--15},
  year={2023},
  publisher={Taylor \& Francis}
}

@article{dong2022frechet,
  title={Fr{\'e}chet kernel sliced inverse regression},
  author={Dong, Yushen and Wu, Yichao},
  journal={Journal of Multivariate Analysis},
  volume={191},
  pages={105032},
  year={2022},
  publisher={Elsevier}
}

@article{matousek2013,
    author = {Matoušek, JiRI.} ,
    title = {Lecture notes on metric embeddings},
    journal = {Technical report, ETH Zürich},
    year = 2013
}

@article{soale2023data,
  title={Data visualization and dimension reduction for metric-valued response regression},
  author={Soale, Abdul-Nasah and Dong, Yuexiao},
  journal={arXiv preprint arXiv:2310.12402},
  year={2023}
}

@article{cook1996graphics,
  title={Graphics for regressions with a binary response},
  author={Cook, R Dennis},
  journal={Journal of the American Statistical Association},
  volume={91},
  number={435},
  pages={983--992},
  year={1996},
  publisher={Taylor \& Francis}
}

@article{yin2008successive,
  title={Successive direction extraction for estimating the central subspace in a multiple-index regression},
  author={Yin, Xiangrong and Li, Bing and Cook, R Dennis},
  journal={Journal of Multivariate Analysis},
  volume={99},
  number={8},
  pages={1733--1757},
  year={2008},
  publisher={Elsevier}
}

@article{de20211,
  title={Why the 1-Wasserstein distance is the area between the two marginal CDFs},
  author={De Angelis, Marco and Gray, Ander},
  journal={arXiv preprint arXiv:2111.03570},
  year={2021}
}

@inproceedings{roy2014modeling,
  title={Modeling and measuring graph similarity: The case for centrality distance},
  author={Roy, Matthieu and Schmid, Stefan and Tredan, Gilles},
  booktitle={Proceedings of the 10th ACM international workshop on Foundations of mobile computing},
  pages={47--52},
  year={2014}
}

@inproceedings{hammond2013graph,
  title={Graph diffusion distance: A difference measure for weighted graphs based on the graph Laplacian exponential kernel},
  author={Hammond, David K and Gur, Yaniv and Johnson, Chris R},
  booktitle={2013 IEEE global conference on signal and information processing},
  pages={419--422},
  year={2013},
  organization={IEEE}
}

@article{leger2016blockmodels,
  title={Blockmodels: A R-package for estimating in Latent Block Model and Stochastic Block Model, with various probability functions, with or without covariates},
  author={Leger, Jean-Benoist},
  journal={arXiv preprint arXiv:1602.07587},
  year={2016}
}

@article{colas2019detrended,
  title={Detrended Fluctuation Analysis in the prediction of type 2 diabetes mellitus in patients at risk: Model optimization and comparison with other metrics},
  author={Col{\'a}s, Ana and Vigil, Luis and Vargas, Borja and Cuesta--Frau, David and Varela, Manuel},
  journal={PloS one},
  volume={14},
  number={12},
  pages={e0225817},
  year={2019},
  publisher={Public Library of Science San Francisco, CA USA}
}

@article{irpino2007optimal,
  title={Optimal histogram representation of large data sets: Fisher vs piecewise linear approximations},
  author={Irpino, Antonio and Romano, Elvira and others},
  journal={Revue des nouvelles technologies de l'information},
  volume={1},
  pages={99--110},
  year={2007}
}

@book{li2018sufficient,
  title={Sufficient dimension reduction: Methods and applications with R},
  author={Li, Bing},
  year={2018},
  publisher={Chapman and Hall/CRC}
}

@article{badea2017exploring,
  title={Exploring the reproducibility of functional connectivity alterations in Parkinson’s disease},
  author={Badea, Liviu and Onu, Mihaela and Wu, Tao and Roceanu, Adina and Bajenaru, Ovidiu},
  journal={PLoS One},
  volume={12},
  number={11},
  pages={e0188196},
  year={2017},
  publisher={Public Library of Science San Francisco, CA USA}
}

\appendix

\section*{Appendix of Proofs}

\begin{proof}[Proof of Proposition \ref{lp_ordering}]\quad \\
 Let $(\bOmega_Y, \rho, \cP)$ denote a measurable metric space with $\cP$ as the probability measure on $\bOmega_Y$. Also, let $1 \leq p < 2 < q < \infty$. Suppose a continuous function $f \in \mathcal{L}^2$, then $\displaystyle\int_{\bOmega_Y} |f|^2 d\cP < \infty$. Now, since $0 < p/2 < 1$, by H\"older's inequality, we have
 \begin{align*}
 \lVert f\rVert_2 = \displaystyle\int_{\bOmega_Y} |f|^2 d\cP &= \displaystyle\int_{\bOmega_Y} |f|^2 .1 d\cP \\
 &\leq  \left(\displaystyle\int_{\bOmega_Y} |f|^{2p/2}d\cP\right)^{p/2} \left(\displaystyle\int_{\bOmega_Y} d\cP \right)^{1-p/2} \\
 &= \left(\displaystyle\int_{\bOmega_Y} |f|^p d\cP\right)^{p/2}\cP(\bOmega_Y)^{1-p/2}
 \end{align*}
By setting $\cP(\bOmega_Y)=1$, we have $\lVert f\rVert_2 \leq \lVert f \rVert_p$.

Notice that we also have $0 < \frac{1}{q} < \frac{1}{2} < \frac{1}{p} \leq 1$. Thus, there exist a unique $\alpha \in (0, 1)$ such that $\cfrac{1}{2} = \cfrac{\alpha}{p} + \cfrac{1-\alpha}{q}$. Therefore, by Littlewood's inequality, we have $\lVert f \rVert_2 \leq  \lVert f \rVert^{\alpha}_p \lVert f \rVert^{1-\alpha}_q$, which implies $\lVert f \rVert_2 \leq \lVert f \rVert_q$.\\
\end{proof}

\begin{proof}[of Proposition \ref{Wasserstein_ordering}]\quad \\
The proof follows directly by Lyapunov inequality: $(\E\lvert y - y'\rvert^{p_1} \big )^{1/p_1}  \leq \big (\E\lvert y - y'\rvert^{p_2} \big )^{1/p_2}$ for $0 < p_1 < p_2 < \infty$. The desired results is achieved by setting $p_1=1$ and $p_2 = p > 1$ in (\ref{wasserstein}). \\
\end{proof}

\begin{proof}[Proof of Proposition \ref{stochastic_dominance}] \quad \\
$Y_1 \succeq_1 Y_2$ if and only if $F_{Y_1}(t) \leq F_{Y_2}(t), \forall t\in [0,\infty)$. Thus, for any non-decreasing function $g(y)$, we have $\E_{Y_1}(g(y)) \geq \E_{Y_2}(g(y))$. Therefore, $W_1(Y_1, Y_2) = \E|Y_1 - Y_2| = \E[\E|Y_1 - Y_2|\mid Y_2 < Y_1, Y_2] = \E(\E(Y_1 - Y_2\mid Y_2 < Y_1, Y_2)) = \E(Y_1) - \E(Y_2)$.    
\end{proof}

\begin{proof}[Proof of Proposition \ref{spdm_prop}] \quad \\
For any $y, y' \in  \bm S_r^{+}$,
\begin{align*}
\lVert  y - y' \rVert_F  &=  \lVert  \L_y\L_y^\top -  \L_{y'}\L_{y'}^\top \rVert_F \\
&=  \lVert  \L_y\L_y^\top - \L_{y}\L_{y'}^\top + \L_{y}\L_{y'}^\top -  \L_{y'}\L_{y'}^\top \rVert_F  \\
&= \lVert  \L_y(\L_y^\top - \L_{y'}^\top) + (\L_{y} -  \L_{y'})\L_{y'}^\top \rVert_F \\
& \leq \lVert  \L_y (\L_y^\top - \L_{y'}^\top) \rVert_F + \lVert (\L_{y} -  \L_{y'})\L_{y'}^\top \rVert_F, \ \text{ by the triangle inequality of } \lVert.\rVert_F \\
& \leq \lVert \L_y \rVert_F \lVert \L_y^\top - \L_{y'}^\top \rVert_F + \lVert \L_{y} -  \L_{y'} \rVert_F \lVert \L_{y'}^\top \rVert_F,  \ \text{ by the sub-multiplicative property of } \lVert.\rVert_F\\
&\leq 2\lambda_{\max}\lVert \L_{y} -  \L_{y'} \rVert_F,
\end{align*}
 where $\lambda_{\max} = \max\{\lVert \L_y \rVert_F, \lVert \L_{y'} \rVert_F \} $. The desired results is achieved for $\lambda_{\max} \geq 0.5$. \\  
\end{proof}

\begin{proof}[Proof of Theorem \ref{containment}] \quad \\
By Theorem 2.3 of \cite{li2018sufficient}, the central space is such that $\mathcal{S}_{\psi(Y)\mid\X} \subseteq \spc$, for any measurable function $\psi(.)$. Thus, 
\begin{align}
    \left\{\mathcal{S}_{f(Y)\mid\X} : f\in \mathcal{L}_{\rho_2}(\bOmega_Y)  \right\} \supseteq  \left\{\mathcal{S}_{f(Y)\mid\X} : f\in \mathcal{L}_{\rho_1}(\bOmega_Y) \right\}.
\end{align}
Therefore, 
\begin{align}
    \bigcap\left\{\mathcal{S}_{f(Y)\mid\X} : f\in \mathcal{L}_{\rho_2}(\bOmega_Y)  \right\} \subseteq  \bigcap\left\{\mathcal{S}_{f(Y)\mid\X} : f\in \mathcal{L}_{\rho_1}(\bOmega_Y)  \right\}.
\end{align}
\end{proof}

\begin{proof}[Proof of Theorem \ref{lipschitz_cms}]\quad \\
Suppose the embedding based on $\rho_1$ concentrates around the mean. Without loss of generality, we let the surrogate response $\widetilde Y_{\rho_1} = h(\bEta^\top\X) + \epsilon$, where $h(.)$ is some unknown link function and $\E(\epsilon) = 0$. Then, $\E(\widetilde Y_{\rho_1}|\X) \indep \X | \bEta^\top\X$. Therefore, the $\mathcal{S}_{\E(\widetilde Y_{\rho_1}|\X)}$ estimate captures all the regression information.

Now, suppose $\widetilde Y_{\rho_2} = h_1(\bEta_1^\top\X) + h_2(\bEta_2^\top\X)\epsilon$, where $\bEta = (\bEta_1, \bEta_2)\in\R^d$ is such that $\bEta^\top\bEta = \I_d$. Then $\mathcal{S}_{\E(\widetilde Y_{\rho_2}|\X)}  = \mathrm{span}(\bEta_1)$ while $\mathcal{S}_{\widetilde Y_{\rho_2}|\X} = \mathrm{span}(\bEta_1,\bEta_2)$. Therefore, estimating $\mathcal{S}_{\E(\widetilde Y_{\rho_2}|\X)}$ leads to some loss of regression information.
\end{proof}

\section*{Hellinger distances}
{\bf Homogeneous Poisson distributions}\\
Let $P(\lambda_1)$ and $P(\lambda_2)$ denote two homogeneous Poisson distributions with parameters $\lambda_1, \lambda_2 > 0$. Then the Hellinger distance between them is given by 
\begin{align*}
    H^2[P_\omega(\lambda_1), P_\omega(\lambda_2)] &= 1 - \displaystyle\sum_{\omega=0}^\infty \sqrt{P_\omega(\lambda_1)P_\omega(\lambda_2)} \\ 
    &=  1 - \displaystyle\sum_{\omega=0}^\infty \cfrac{e^{-(\lambda_1 + \lambda_2)/2}(\sqrt{\lambda_1\lambda_2})^\omega}{\omega!}  \\
    &= 1 - e^{-(\lambda_1 + \lambda_2)/2}\displaystyle\sum_{\omega=0}^\infty \frac{(\sqrt{\lambda_1\lambda_2})^\omega}{\omega!} \\
     &= 1 - e^{-(\lambda_1 + \lambda_2)/2 + \sqrt{\lambda_1\lambda_2}}
     = 1 - e^{-\frac{1}{2}(\sqrt{\lambda_1} - \sqrt{\lambda_2})^2}
\end{align*}


{\bf Gamma distributions with the same scale parameter} \\
Consider two Gamma distributions with shape and scale parameters $(\alpha_1, \theta)$ and $(\alpha_2,\theta)$, respectively, where $\alpha_1, \alpha_2, \theta > 0$. Then the Hellinger distance between them is given by
\begin{align*}
    H^2[f_\omega(\alpha_1, \theta), f_\omega(\alpha_2, \theta)] &= 1 - \int \sqrt{f_\omega(\alpha_1, \theta)f_\omega(\alpha_2, \theta)}  \partial \omega \\
    &= 1 - \int_0^\infty \cfrac{\omega^{(\alpha_1+\alpha_2)/2-1}e^{-\omega/\theta}}{\sqrt{\Gamma(\alpha_1)\Gamma(\alpha_2)\theta^{\alpha_1+
    \alpha_2}}}  \partial \omega \\
    &= 1-\cfrac{1}{\sqrt{\Gamma(\alpha_1)\Gamma(\alpha_2)\theta^{\alpha_1+
    \alpha_2}}} \int_0^\infty \omega^{(\alpha_1+\alpha_2)/2-1}e^{-\omega/\theta}  \partial \omega \\
    &= 1-\cfrac{\theta^{(\alpha_1+\alpha_2)/2}\Gamma\left(\frac{\alpha_1+\alpha_2}{2}\right)}{\sqrt{\Gamma(\alpha_1)\Gamma(\alpha_2)\theta^{\alpha_1+
    \alpha_2}}} 
    = 1-\cfrac{\Gamma\left(\frac{\alpha_1+\alpha_2}{2}\right)}{\sqrt{\Gamma(\alpha_1)\Gamma(\alpha_2)}}\\
\end{align*}

{\bf Log-normal distributions with the same scale parameter} \\
Consider two log-normal distributions with location and scale parameters $(\mu_1, \sigma)$ and $(\mu_2,\sigma)$, respectively, where $(\mu_1, \mu_2) \in (-\infty, \infty)$ and $\sigma > 0$. Then the Hellinger distance between them is given by
\begin{align*}
    H^2[f_\omega(\mu_1, \sigma), f_\omega(\mu_2, \sigma)] &= 1 - \int \sqrt{f_\omega(\mu_1, \sigma)f_\omega(\mu_2, \sigma)}  \partial \omega \\
    &= 1 - \int_0^\infty \cfrac{1}{\omega\sigma\sqrt{2\pi}}e^{-\frac{1}{2\sigma^2}[(\ln \omega-\mu_1)^2/2 + (\ln \omega-\mu_2)^2/2] } \partial \omega \\
    &= 1 - e^{-\frac{(\mu_1-\mu_2)^2 }{8\sigma^2}} \int_0^\infty \cfrac{1}{\omega\sigma\sqrt{2\pi}}e^{-\frac{1}{2\sigma^2}(\ln \omega-\frac{\mu_1+\mu_2}{2})^2 } \partial \omega \\
    &= 1 - e^{-\frac{(\mu_1-\mu_2)^2 }{8\sigma^2}}
\end{align*}

\end{document}